\DeclareMathAlphabet{\mathpzc}{OT1}{pzc}{m}{it}
\DeclareMathOperator{\Ima}{Im} \DeclareMathOperator{\rk}{rk}
\DeclareMathOperator{\Tr}{Tr} \DeclareMathOperator{\Div}{Div}
\DeclareMathOperator{\SL}{SL} \DeclareMathOperator{\GL}{GL}
\DeclareMathOperator{\Hei}{H} \hyphenation{super-con-formal}
\title{Twisted Elliptic Genera of {\boldmath $\mathcal{N}=2$} SCFTs in
  Two Dimensions}
\author{
  Toshiya Kawai\\
  Research Institute for Mathematical Sciences,\\
  Kyoto University, Kyoto, Kyoto 606--8502, Japan }
\abstract{ The elliptic genera of two-dimensional $\mathcal{N}=2$
  superconformal field theories can  be twisted by the action of
  the integral Heisenberg group if their $U(1)$ charges are fractional.  
  The basic properties of the resulting twisted  elliptic genera and 
  the associated twisted Witten indices are
  investigated with due attention to their behaviors in
  orbifoldization. Our findings are illustrated by and applied to
  several concrete examples.  We give a better understanding of the
  duality phenomenon observed long before for certain Landau-Ginzburg
  models.  We revisit and prove an old conjecture of Witten which
  states that every $\mathsf{A}\mathsf{D}\mathsf{E}$ Landau-Ginzburg
  model and the corresponding minimal model share the same elliptic
  genus. Mathematically, we establish $\mathsf{A}\mathsf{D}\mathsf{E}$
  generalizations of the quintuple product identity.  }
\begin{document}

\newcommand{\CC}{{\mathbb{C}}} \newcommand{\GG}{{\mathbb{G}}}
\newcommand{\II}{{\mathbb{I}}} \newcommand{\JJ}{{\mathbb{J}}}
\newcommand{\RR}{\mathbb{R}} \newcommand{\QQ}{\mathbb{Q}}
\newcommand{\ZZ}{\mathbb{Z}} \newcommand{\HH}{\mathbb{H}}
\newcommand{\NN}{\mathbb{N}} \newcommand{\PP}{\mathbb{P}}
\newcommand{\abs}[1]{\lvert#1\rvert}
\newcommand{\Abs}[1]{\lVert#1\rVert}
\newcommand{\inv}[1]{\frac{1}{#1}}
\newcommand{\vev}[1]{\langle#1\rangle}
\newcommand{\abcd}{\begin{pmatrix}a&b\\c&d\end{pmatrix}}
\newcommand{\smallabcd}{\left(\begin{smallmatrix}a&b\\c&d\end{smallmatrix}\right)}
\newcommand{\bfe}[1]{{ \mathbf{e}\left[#1\right]}}
\newcommand{\GCD}[1]{\mathop{\rm gcd}(#1)}
\newcommand{\LCM}[1]{\mathop{\rm lcm}(#1)}

\newcommand{\bfL}{\Upsilon} \newcommand{\En}{\Delta}
\newcommand{\weight}{\omega} \newcommand{\twista}{r}
\newcommand{\twistb}{s} \newcommand{\wi}{\mathfrak{X}}

\theoremstyle{plain} \newtheorem{definition}[equation]{Definition}
\newtheorem{lem}[equation]{Lemma}
\newtheorem{prop}[equation]{Proposition}
\newtheorem{thm}[equation]{Theorem}
\newtheorem{conj}[equation]{Conjecture}
\newtheorem{cor}[equation]{Corollary}
\newtheorem{assumption}[equation]{Assumption}

\theoremstyle{remark} \newtheorem{rem}[equation]{Remark}
\newtheorem{ex}[equation]{Example}

\section{Introduction}\label{sec:intro}

Besides being basic tools for constructing superstring vacua,
$\mathcal{N}=2$ superconformal field theories (SCFTs) in two
dimensions are interesting subjects in their own rights and have long 
been studied by many authors.  Among other things, their elliptic
genera have drawn attention since they are often amenable to exact
computations and help us to explore possible relations among different
models.

In this paper, we discuss certain aspects of such genera of which much
notice has not been taken.  We consider the twisted versions of
elliptic genera with the twisting given by the action of the integral Heisenberg group.
 Actually, this sort of elliptic genera previously appeared in
the process of orbifoldization \cite{Kawai:1994uq}, but we will give a
systematic study of them here. The twisted elliptic genera we consider
are of interest only when the $U(1)$ charges of the theory  are
fractional  as in Landau-Ginzburg (LG) models or, say,
Kazama-Suzuki models.  If instead the $U(1)$ charges  are integral as 
is the case with any $\mathcal{N}=2$ sigma model
with a compact Calabi-Yau (CY) target space, the twisted elliptic
genera trivially reduce to the ordinary one.

We begin by briefly reviewing in \S \ref{sec:SF} the fundamental
notion of  the spectral flow for the $\mathcal{N}=2$ superconformal algebra
(SCA) \cite{Schwimmer:1987jk}.  In \S \ref{sec:EG}, we recast the
functional properties of elliptic genera (as proposed in
\cite{Kawai:1994uq}) into  suitable forms  so that the role of the Jacobi
group \cite{Eichler:1985kx} is  apparent.  We introduce the twisted
elliptic genera by applying the elements of the integral Heisenberg group to
the elliptic genera in \S\ref{sec:TEG}. 
For the twisted elliptic genera thus defined
we can also introduce  the twisted Witten indices
 exactly in the same way as the Witten index is associated with the ordinary
elliptic genus.  We explain how the twisted Witten indices are
related to the $\chi_y$-genus.  Then, we investigate in \S\ref{sec:Orb} the behaviors of  the twisted
elliptic genera and the twisted Witten indices under
orbifoldization.

When $\hat c<1$ (with $\hat c$ being (the one third of) the 
central charge), the information of the twisted Witten indices turns
out to be especially useful, since the elliptic genera of any two
theories with the same $\hat c<1$ satisfying the same functional
properties ought to be equal identically if their twisted Witten
indices coincide. This is an easy consequence of the lemma of Atkin
and Swinnerton-Dyer \cite{Atkin:1954lr} as will be explained in
\S\ref{sec:ASD}.

In the remaining sections, our general findings are illustrated by and
applied to several concrete examples. We treat general LG models and
their orbifolds in \S\ref{sec:LG} and explain in \S\ref{sec:duality}
how the insight gained in this work helps us to understand the
somewhat mysterious phenomenon observed in \cite{Kawai:1992kx}
concerning the dualities (or mirror symmetries) of certain LG models.
In \S\ref{sec:minimal} we discuss the $\mathcal{N}=2$ minimal models
and prove an old conjecture of Witten \cite{Witten:1994fj} which
states that every $\mathsf{A}\mathsf{D}\mathsf{E}$ LG model and the
corresponding minimal model share the same elliptic genus. We also
explain how this result can be interpreted mathematically as
$\mathsf{ADE}$ generalizations of the quintuple product identity
(QPI).  We point out that the identities for the $\mathsf{A}$-series
are essentially Bailey's generalizations of the QPI
\cite{Bailey:1953lr}.

\medskip {\noindent {\bf Convention}}: By a $\mathcal{N}=2$ SCFT we
always mean a $\mathcal{N}=(2,2)$ SCFT.

\medskip {\noindent {\bf Notations}}: We write $\bfe{x}$ for
$\exp(2\pi \sqrt{-1}x)$.  For a positive integer $n$  we set $ \zeta_n=\bfe{1/n}$
 and  write $\ZZ_n=\ZZ/n\ZZ$ for the integers modulo $n$.  We  denote by $\delta_{a,b}^{(n)}$ the Kronecker 
 delta symbol modulo $n$.  Namely,
$\delta_{a,b}^{(n)}$ is equal to $1$ if $a\equiv b \pmod{n}$ and
vanishes otherwise.

\section{The spectral flow of the {\boldmath $\mathcal{N}=2$}
  SCA} \label{sec:SF}
  
The $\mathcal{N}=2$ SCA $\mathfrak{A}_\varphi$ with the mode parameter
$\varphi\in \RR$ is a super Lie algebra over $\CC$ linearly spanned by
the bosonic center $\hat C$ whose eigenvalue is commonly written as
$\hat c$, the bosonic Virasoro generators $\{L _n\}_{n\in \ZZ}$, the
bosonic $U(1)$ current generators $\{J_n\}_{\in\ZZ}$, and the
fermionic supercurrent generators $\{G^\pm_k\}_{k\in \ZZ\mp \varphi}$.
The  non-vanishing super commutation relations are given by
\begin{equation}
  \begin{split}
    [L_m,L_n]&=(m-n)L_{m+n}+\frac{m^3-m}{4}\delta_{m+n,0}\,   \hat{C},\\
    [L_m,J_n]&=-nJ_{m+n},\\
    [J_m,J_n]&=m\delta_{m+n,0}\, \hat{C},\\
    [L_m,G^\pm_k]&=(\frac{m}{2}-k)G_{m+k}^\pm,\\
    [J_m,G_k^\pm]&=\pm G_{m+k}^\pm,\\
    [G^+_k,G^-_l]_+&=2L_{k+l}+(k-l)J_{k+l}+(k^2-\frac{1}{4})\delta_{k+l,0}\,\hat{C}.\\
   \end{split}
\end{equation}
The algebra $\mathfrak{A}_\varphi$ is called Ramond if $\varphi\in
\ZZ$ and Neveu-Schwarz if $\varphi\in \ZZ+1/2$.  The so-called twisted
$\mathcal{N}=2$ SCA is of no concern to us in this paper.

Actually, we have an isomorphism
$\mathfrak{A}_\varphi\cong \mathfrak{A}_{\varphi'}$ as super Lie algebras for any
$\varphi,\varphi'\in \RR$. 
If we set  $\eta=\varphi-\varphi'$, this
isomorphism is given by the {\em spectral flow\/}
$\sigma_\eta:\mathfrak{A}_{\varphi}\to \mathfrak{A}_{\varphi'}$ 
which one defines by 
\begin{align}
  \sigma_\eta(\hat{C})&=\hat{C},\\
  \sigma_\eta(L_n)&=L_n+\eta\, J_n+\frac{\eta^2}{2}\delta_{n,0}\, \hat{C},\\
  \sigma_\eta(J_n)&=J_n+\eta\, \delta_{n,0}\,\hat{C},\\
  \sigma_\eta(G^\pm_k)&=G^\pm_{k\pm\eta}\label{Gmoding}
\end{align}
together with $\CC$-linearity \cite{Schwimmer:1987jk, Lerche:1989cy}.
 The inverse of $\sigma_\eta$ is given
by $\sigma_{-\eta}$. 
If $\rho_{\varphi'}$ is a (highest weight) representation of
$\mathfrak{A}_{\varphi'}$, we also obtain a representation $\rho_\varphi$ of
$\mathfrak{A}_{\varphi}$ via
$\sigma_\eta$.  An important point to keep in mind  is that although we have
$\mathfrak{A}_\varphi\cong \mathfrak{A}_{\varphi'}$, the two
representations $\rho_\varphi$ and $\rho_{\varphi'}$ are in general
not equivalent.  This is so since the modes of $G^\pm_k$ shift as in
\eqref{Gmoding} modifying the highest weight conditions.

For instance, we will meet the following situation later. Consider the
Ramond $\mathcal{N}=2$ SCA $\mathfrak{A}_0$ and fix its representation
$\rho_0$ on some space $\mathcal{H}$.  The spectral flow
$\sigma_r:\mathfrak{A}_r\to \mathfrak{A}_0$ with $r\in \ZZ$ induces a
representation $\rho_r$ of another copy of the Ramond $\mathcal{N}=2$
SCA $\mathfrak{A}_r$ on $\mathcal{H}$. 
We can consider the subspaces
$\mathcal{V}_0\subset\mathcal{H}$ and
$\mathcal{V}_r\subset\mathcal{H}$ of the Ramond ground states for
$\rho_0$ and $\rho_r$ which are killed by the respective
$G_0^\pm$. However, $\mathcal{V}_0$ and $\mathcal{V}_r$ in general
have different dimensions since $\mathcal{V}_r$ corresponds to the
states annihilated by $G^\pm_{\pm r}$ in the representation $\rho_0$.
So, in particular, $\rho_0$ and $\rho_r$ will have different Witten
indices and they are different representations in general.  Obviously, a
similar statement applies  for $\rho_r$ and $\rho_{r'}$ with integers
$r\neq r'$. 

Before ending this brief section, we recall the most prominent application
of the spectral flow. The spectral flow
$\sigma_{1/2}:\mathfrak{A}_{1/2}\to \mathfrak{A}_{0}$ determines a
representation $\rho_{1/2}$ of the Neveu-Schwarz algebra
$\mathfrak{A}_{1/2}$.  As amply discussed in \cite{Lerche:1989cy},
$\mathcal{V}_0$ is bijectively mapped to the chiral ring $\mathcal{R}$
annihilated by $G^\pm_{\mp1/2}$ for $\rho_{1/2}$.

\section{Elliptic genera of {\boldmath $\mathcal{N}=2$}
  SCFTs} \label{sec:EG}
We begin with the setup of  a general $\mathcal{N}=2$ SCFT  to be studied in this work.

Let $\mathfrak{A}_{0}$ and $ \tilde{\mathfrak{A}}_{0}$ be respectively the left and right  Ramond $\mathcal{N}=2$ SCA of our $\mathcal{N}=2$ SCFT which is assumed to have  a non-negative rational  left=right  central charge $\hat c$.
(Below we  follow the convention  such that, given some object for the left-mover,
 the corresponding one for the right-mover is
written with a tilde.)  
Let  $\varrho_{0,0}$  be  the representation of  $(\mathfrak{A}_{0},\tilde{\mathfrak{A}}_{0})$  underlying the theory.    Let also  $\varrho_{r,r'}$ $(r,r'\in \ZZ)$  be the representation of
 $(\mathfrak{A}_{r},\tilde{\mathfrak{A}}_{r'})$ induced from $\varrho_{0,0}$ by the  spectral flow $(\sigma_r,\tilde \sigma_{r'})$.
We assume the existence of a positive integer $h$ such that
\begin{itemize}
\item  The central charge $\hat c$ can be expressed for some integers $\mathpzc{D}$ and $\delta$ as
\begin{equation}
\hat c=\mathpzc{D}-\frac{2\delta}{h}.
\end{equation} 
\item   The left and right $U(1)$ charges in $\varrho_{r,r'}$  belong to  $-\hat c/2+(1/h)\ZZ$.
\item There exist periodic isomorphisms $\varrho_{r+p,r'+p'}\cong \varrho_{r,r'}$ for any $p,p'\in h\ZZ$.
\end{itemize}
 We choose $h$ to be the smallest possible one in the
following.

Let $\HH$ be the upper half-plane $\{\tau\in\CC\mid\Ima \tau>0\}$. According to Witten, the elliptic genus is
defined by
\begin{equation}\label{def:eg}
  Z(\tau,z)=\Tr_{\varrho_{0,0}} (-1)^{F}(-1)^{\tilde F}q^{L _0-\hat c/8}
  \bar{q}^{\tilde L _0-\hat c/8}
  y^{J _0}, \quad (\tau,z)\in \HH\times \CC
\end{equation}
where  $(-1)^F$ and $(-1)^{\tilde F}$ are the usual left and right
fermion parity operators and we have set  $q=\bfe{\tau}$ and $y=\bfe{z}$. Due to supersymmetric
cancellations between bosonic and fermionic states above the ground
level for  the right-mover we may and will assume that the elliptic genus $Z(\tau,z)$ is a
holomorphic function on $\HH\times \CC$ having a Fourier expansion of
the form
\begin{equation}
  Z(\tau,z)=\sum_{n\ge 0}\mathfrak{Z}_n(z)q^n, \quad \mathfrak{Z}_n(z)\in \ZZ[y^{{1}/{h}},y^{-{1}/{h}}].
\end{equation}
In \cite{Kawai:1994uq}, we proposed that $Z(\tau,z)$ should in
addition satisfy
\begin{align}
  Z \left( \frac{a\tau +b}{ c\tau +d}, \frac{z}{c\tau +d} \right)&=
  \bfe{\frac{\hat c}{2}\frac{c z^2}{ c\tau +d}} Z(\tau, z)\,,\quad
  \begin{pmatrix} a&b\\ c&d \end{pmatrix} \in \SL_2({\ZZ}),\label{eg1}\\
  Z(\tau,z+\twista\tau+\twistb)&=(-1)^{\hat
    c(\twista+\twistb)}\bfe{-\frac{ \hat c}{2} (\twista^2\tau+2\twista
    z)}Z(\tau,z),\quad (\twista,\twistb)\in (h\ZZ)^2.\label{eg2}
\end{align}
These properties have been checked for many examples and will be
postulated in this work.
One can almost  derive \eqref{eg2}
from our assumption except the subtlety for the factor $(-1)^{\hat c r}$ which should  be understood form the behavior of $(-1)^F(-1)^{\tilde F}$ under the left spectral flow $\sigma_r$.
In fact,  if  the eigenvalues of $J_0-\tilde J_0$ are always integers so that  we can identify $(-1)^F(-1)^{\tilde F}$ with  $\bfe{\frac{1}{2}(J_0-\tilde J_0)}$, we see easily that $(-1)^{\hat c r}$ arises from $\sigma_r$.
This choice of $(-1)^F(-1)^{\tilde F}$ has been discussed, for instance, in \cite{Lerche:1989cy}.

The left $U(1)$ charge spectrum of  the chiral ring (or the  $(c,c)$-ring to be precise   \cite{Lerche:1989cy})   is captured by the
$\chi_y$-genus\footnote{ For a $\mathcal{N}=2$ sigma model with a
  compact Calabi-Yau target space $V$, we have $h=1$ and
  $\chi_y=\sum(-1)^{i+j}h^{i,j}(V)y^i$.  Thus our convention for the
  $\chi_y$-genus differs from the original one of Hirzebruch by the
  sign of $y$.}.  In view of the spectral flow $\sigma_{\pm 1/2}$, we
should have
\begin{equation}
  \mathfrak{Z}_0(z)=y^{-{\hat c}/{2}}\chi_y,\quad  \chi_y\in \ZZ[y^{1/h}].
\end{equation}
Since $Z(\tau,-z)=Z(\tau,z)$ by \eqref{eg1} we have the duality
 $\chi_{y^{-1}}=y^{-\hat c}\chi_y$.  As usual, the Witten index $\wi$ is
defined by
\begin{equation}
  \wi=Z(\tau,0)=\chi_y\vert_{y=1}.
\end{equation}

To develop our formalism below, it is expedient  to rephrase the properties
\eqref{eg1} and \eqref{eg2} in the language of the Jacobi group.  The
standard reference for this purpose is the monograph by Eichler and
Zagier \cite{Eichler:1985kx} (though we need a small extension of the
materials treated there.)  We first introduce the  symplectic product on
$\ZZ^2$ by
\begin{equation}
  v\wedge v'=vJ (v')^T=\det\begin{pmatrix}v\\v'
  \end{pmatrix}, \quad J=\begin{pmatrix}0&1\\-1&0\end{pmatrix}
\end{equation}
where $v,v'\in \ZZ^2$ are interpreted as row vectors.  We have
$v\wedge v'=-v'\wedge v$ and $vM\wedge v'M=v\wedge v'$ for any $M\in
\SL_2(\ZZ)\cong \mathrm{Sp}_2(\ZZ)$.  The integral Heisenberg group $\Hei(\ZZ)$
is the set $\ZZ^3=\{(v,\kappa)\mid v\in \ZZ^2,\kappa\in \ZZ\}$
equipped with the group multiplication
\begin{equation}
  (v,\kappa) \ast (v',\kappa')=(v+v',\kappa+\kappa'+v\wedge v').
\end{equation}
The modular (or symplectic) group $\SL_2(\ZZ)$ acts on $\Hei(\ZZ)$ by
$(v,\kappa)M=(vM,\kappa)$.  Then the Jacobi group
$G^J=\SL_2(\ZZ)\ltimes \Hei(\ZZ)$ is the group with the multiplication
law
\begin{equation} [M,X]\star[M',X']=[MM',XM' \ast X]
\end{equation}
where $M,M'\in \SL_2(\ZZ)$ and $X,X'\in \Hei(\ZZ)$.  For a positive
integer $n$, we also need to introduce
$\Hei(n\ZZ)=(n\ZZ)^3=\{(v,\kappa)\mid v\in (n\ZZ)^2,\kappa\in n\ZZ\}$
which is a subgroup of $\Hei(\ZZ)$.  Accordingly, we set
$G^J_n=\SL_2(\ZZ)\ltimes \Hei(n\ZZ)$.

Let $ \varepsilon:\Hei(\ZZ)\to \{1,-1\}$ be defined by
\begin{equation}
  \varepsilon:
  ((\twista,\twistb),\kappa)\mapsto (-1)^{\twista+\twistb+\twista\twistb+\kappa}.
\end{equation}
As one may easily confirm, this is a group homomorphism satisfying
$\varepsilon(XM)=\varepsilon(X)$ for any $M\in \SL_2(\ZZ)$.  For
$\ell\in \{0,1\}$, $m\in \QQ$, and $\phi(\tau,z)$ a holomorphic
function on $\HH\times \CC$, we set
\begin{equation}
  \phi\|_{\ell,m}X (\tau,z)= \varepsilon(X)^\ell\bfe{m(\twista^2\tau+2\twista z+\twista\twistb+\kappa)}\phi(\tau,z+\twista\tau+\twistb)
\end{equation}
where $X=((\twista,\twistb),\kappa)\in \Hei(\ZZ)$. This gives a
Schr{\" o}dinger type representation of $\Hei(\ZZ)$.  Namely, we have
\begin{equation}
  \phi\|_{\ell,m} X\|_{\ell,m} X'=\phi\|_{\ell,m} X \ast X',\quad X,X'\in \Hei(\ZZ).
\end{equation}
In the following it is useful to remember
\begin{equation}
  \phi\|_{\ell,m} X\|_{\ell,m} X'=\bfe{2m(v\wedge v')}\phi\|_{\ell,m} X'\|_{\ell,m} X \label{exchange}
\end{equation}
as well as
\begin{align}
  \phi\|_{\ell,m} (v,\kappa)&=\bfe{(\ell/2+m)\kappa}\,\phi\|_{\ell,m} (v,0),\label{hei1}\\
  \phi\|_{\ell,m}
  (v,\kappa)\|_{\ell,m}(v',\kappa')&=\bfe{(\ell/2+m)(v\wedge
    v')}\,\phi\|_{\ell,m} (v+v',\kappa+\kappa'). \label{hei2}
\end{align}

If we introduce
\begin{equation}
  \phi\vert_{k,m}M(\tau,z)
  =(c\tau+d)^{-k}\bfe{-\frac{mcz^2}{c\tau+d}}\phi\left(\frac{a\tau+b}{c\tau+d},\frac{z}{c\tau+d}\right)
\end{equation} 
for $k\in \ZZ$, $m\in \QQ$, and $M=\left(\begin{smallmatrix}
    a&b\\c&d\end{smallmatrix}\right) \in \SL_2(\ZZ)$, 
 we have
\begin{equation}
  \phi\vert_{k,m}M\vert_{k,m} M'=\phi\vert_{k,m}MM',\quad M,M'\in \SL_2(\ZZ)
\end{equation}
which yields a representation of $\SL_2(\ZZ)$.

Since a straightforward calculation shows that
\begin{equation}
  \phi\vert_{k,m}M\|_{\ell,m}XM=\phi\|_{\ell,m} X\vert_{k,m} M,  \label{com1}
\end{equation}
we obtain a representation of $G^J$ by setting
\begin{equation}
  \phi\vert_{k,\ell,m}[M,X]=\phi\vert_{k,m}M\|_{\ell,m} X.
\end{equation}

Now we can rephrase the properties \eqref{eg1} and \eqref{eg2} as
\begin{align}
  Z\vert_{0,\hat c/2}M&=Z,\quad  M\in \SL_2(\ZZ)\label{modular2},\\
  Z\|_{\tilde{\mathpzc{D}},\hat c/2} X&=Z,\quad X\in
  \Hei(h\ZZ)\label{period2}
\end{align}
or more concisely,
\begin{equation}
  Z\vert_{0,\tilde{\mathpzc{D}},\hat c/2} [M,X] =Z,\quad     [M,X]\in G^J_h
\end{equation}
where $\tilde{\mathpzc{D}}\in \{0,1\}$ is determined by 
$\mathpzc{D}\equiv \tilde{\mathpzc{D}} \pmod{2}$.

Let $\GG_m$ be the multiplicative group of non-zero complex numbers
and let $\CC\GG_m$ be the group ring of $\GG_m$ over $\CC$.  Given a
$\xi \in \GG_m$, the corresponding element in $\CC\GG_m$ is denoted by
$\vev{\xi}$.  For a monic polynomial $p(t)=\prod_{i=1}^k(t-\xi_i)$
with an indeterminate $t$ and $\xi_i\in \GG_m$, define $\Div(p(t))\in
\CC\GG_m$ by $\vev{\xi_1}+\cdots+\vev{\xi_k}$.  Set
$\Lambda_d=\Div(t^d-1)=\sum_{i\in \ZZ_d}\vev{\zeta_d^i}$ for a
positive integer $d$.  We will sometimes identify $\Lambda_1=\vev{1}$
with $1$.

Given a $\mathcal{N}=2$ SCFT  we  define
$\bfL\in \CC\GG_m$  as follows.
Suppose that $y^{\delta/h}\chi_y \in \ZZ[y^{1/h},y^{-1/h}]$ is written explicitly
 as $y^{\delta/h}\chi_y=\sum n_i y^{i/h}$.  Then we put $\bfL=\sum n_i\vev{\zeta_h^i}$.  This
can be uniquely expanded as\footnote{To see this, let $l_d$ be defined by
$l_d=u_{h/d}$ if $d\mid h$ and $l_d=0$ if otherwise.  Then, we should
have $n_i=\sum_{d\mid i}l_d$, which can be inverted as
$l_i=\sum_{d\mid i}\mu(d)n_{i/d}$ with the aid of the M{\" o}bius
function $\mu$.}
\begin{equation}\label{Gammaexp}
  \bfL=\sum_{d\mid h}u_d\Lambda_d
\end{equation}
for  some $u_d\in \ZZ$.
Notice that, for any $s\in \ZZ$,  we have
\begin{equation}
  y^{\delta/h}\chi_y\vert_{z=\twistb}=\sum_{d\mid \twistb,d\mid h}u_d d.  \label{chiy-ud}
\end{equation}
 In particular, the case  $s=0$ leads to
\begin{equation}
  \wi=\sum_{d\mid h}u_d d.
\end{equation}
Note also that we have $\Upsilon=u_1\Lambda_1=\wi\Lambda_1$ if  $h=1$.

\section{Twisted elliptic genera}\label{sec:TEG}

We define the twisted elliptic genus
$Z_v(\tau,z)$ for any $v\in \ZZ^2$
by the action of $(v,0)\in \Hei(\ZZ)$ on $Z(\tau,z)$, namely,
\begin{equation}
  Z_v:=Z\|_{\tilde{\mathpzc{D}},\hat c/2}(v,0).
\end{equation}
It is then straightforward to show that
\begin{align}
  Z_{v}\vert_{0,\hat c/2}M&=Z_{vM},\quad M\in \SL_2(\ZZ),\label{Zvmodular}\\
  Z_{v}\|_{\tilde{\mathpzc{D}},\hat c/2}X&=Z_{v}, \quad X\in
  \Hei(h\ZZ).\label{Zvperiod}
\end{align}
In fact, \eqref{Zvmodular} readily follows from \eqref{com1} and
\eqref{modular2} while \eqref{Zvperiod} follows from \eqref{exchange}
and \eqref{period2}.  We should also note that
\begin{equation}
  Z_{v}\|_{\tilde{\mathpzc{D}},\hat c/2}X=\zeta_h^{-(\kappa'+v\wedge v')\delta}Z_{v+v'},\quad X=(v',\kappa')\in \Hei(\ZZ)
  \label{Zvadd}
\end{equation}
which can be seen from \eqref{hei1} and \eqref{hei2}.  Combined with
\eqref{Zvperiod}, this implies the periodicity
\begin{equation}\label{Zvperiodicity}
  Z_v=Z_{v+v'},\quad  v' \in (h\ZZ)^2.
\end{equation}

Throughout this work we assume the following important property:
\begin{assumption}\label{tWgenus}
The twisted Witten index $\wi_v$ defined by $\wi_v:=Z_v(\tau,0)$ for any $v\in \ZZ^2$
 is independent of $\tau$.
\end{assumption} {\noindent In order} to explain why we believe this assumption
to be a reasonable one, we first write down the explicit expression of $Z_v$
as
\begin{equation}\label{extwist}
  Z_{(r,s)}(\tau,z)=(-1)^{{\mathpzc{D}}(\twista+\twistb+\twista\twistb)}
 \bfe{\frac{\hat c}{2}(\twista^2\tau+2\twista z+\twista\twistb)}Z(\tau,z+\twista\tau+\twistb).
\end{equation}
Then we recognize, after taking into account the properties of the 
spectral flow reviewed in \S \ref{sec:SF}, that $Z_{(r,0)}(\tau,z)$ is
essentially the {\em ordinary\/} elliptic genus with the trace in
\eqref{def:eg} being over  $\varrho_{r,0}$ instead of  $\varrho_{0,0}$.  Therefore $Z_{(r,0)}(\tau,0)$ must be a constant for any
$r\in \ZZ$.  Suppose that $(r,s)\in\ZZ^2$ is not equal to $(0,0)$.
Then there exist integers $m$ and $n$ such that
$m\twista+n\twistb=\GCD{\twista,\twistb}$.  If we put
\begin{equation}\label{Mchoice}
  M=\begin{pmatrix} \twista/\GCD{\twista,\twistb}&\twistb/\GCD{\twista,\twistb}\\
    - n&m
  \end{pmatrix}\in \SL_2(\ZZ)
\end{equation}
then \eqref{Zvmodular} means
\begin{equation}\label{Zvmodularsimp}
  Z_{(\GCD{r,s},0)}\vert_{0,\hat c/2}M=Z_{(r,s)}.
\end{equation}
Since the LHS of this is constant when $z=0$, we see that
$Z_{(r,s)}(\tau,0)$ must be constant as well. In any case, Assumption
\ref{tWgenus} can be directly  confirmed for all the examples we meet
below.

Once the assumption is accepted, we find rather stringent
constraints on $\wi_v$.  For instance, if $Z_v(\tau,z)=y^{-\hat
  c/2}(\chi_y)_v+O(q)$, then we should have
$\wi_v=(\chi_y)_v\vert_{y=1}$.  More striking is the fact that $\wi_v$
is essentially determined by $\bfL$ (hence by $\chi_y$).
 Observe first that
\begin{align}
  \wi_v&=\wi_{vM},\qquad  M\in \SL_2(\ZZ),\label{Xmodular}\\
  \wi_v&=\wi_{v+v'},\qquad v'\in (h\ZZ)^2. \label{Xperiod}
\end{align}
These follow from \eqref{Zvmodular} and \eqref{Zvperiodicity}.  W see
that $\wi_{(\twista,\twistb)}=\wi_{(\GCD{\twista,\twistb},0)}$ for any
$(r,s)\in \ZZ^2$.  Indeed, if $(r,s)=(0,0)$ this is trivially true
while if $(\twista,\twistb)\ne (0,0)$ this follows from \eqref
{Zvmodularsimp} by setting $z=0$.  That
$\wi_{(\twista,\twistb)}=\wi_{(\GCD{\twista,\twistb},0)}$ for any
$(r,s)\in \ZZ^2$ implies
$\wi_{(\twista,\twistb)}=\wi_{(\twistb,\twista)}$ for any $(r,s)\in
\ZZ^2$.  Consequently, $\wi_v$ actually has a larger symmetry than
\eqref{Xmodular}\footnote{Recall that generators of $\GL_2(\ZZ)$ can be taken as those of 
  $\SL_2(\ZZ)$ and $\left(\begin{smallmatrix}
      0&1\\1&0\end{smallmatrix}\right)$.}:
\begin{equation}
  \wi_v=\wi_{vM},\qquad  M\in \GL_2(\ZZ).\label{Xmodularext}
\end{equation}
Since
$\wi_{(\GCD{\twista,\twistb},0)}=\wi_{(0,\GCD{\twista,\twistb})}$, we
also have $\wi_{(\twista,\twistb)}=\wi_{(0,\GCD{\twista,\twistb})}$.
Furthermore, since we have
$\wi_{(0,\GCD{\twista,\twistb})}=\wi_{(h,\GCD{\twista,\twistb})}$ by
\eqref{Xperiod}, we may employ the same logic to see
$\wi_{(\twista,\twistb)}=\wi_{(0,\GCD{h,\twista,\twistb})}$.  Notice
then that
\begin{equation}
  \wi_{(0,\GCD{h,\twista,\twistb})}=(-1)^{{\mathpzc{D}}\GCD{h,\twista,\twistb}}y^{-\hat c/2}\chi_y\vert_{z= \GCD{h,\twista,\twistb}}=y^{\delta/h}\chi_y\vert_{z=\GCD{h,\twista,\twistb}}.
\end{equation}
By applying \eqref{chiy-ud} to this, we therefore obtain
\begin{equation}
  \wi_{(\twista,\twistb)}=\sum_{d\mid\GCD{ h,\twista,\twistb}}u_d d.
\end{equation}
In other words, we have shown, quite in parallel with
\eqref{Gammaexp},  that 
\begin{equation}\label{Xvexp}
  \wi_v=\sum_{d\mid h}u_d (\En_d)_v
\end{equation}
where we have introduced
\begin{equation}\label{deltadef}
  (\En_d)_v=\frac{1}{d}\sum_{v'\in (\ZZ_d)^2}\zeta_d^{v\wedge v'}
=\begin{cases} d&\text{if $d \mid \twista$ and $d \mid \twistb$ for $v=(\twista,\twistb)$,}\\
    0&\text{otherwise.} 
  \end{cases}
\end{equation}
Note that this in particular means $\wi_v\in \ZZ$.  Since we can
easily see that ${(\Delta_d)}_v={(\Delta_d)}_{vM}$ $(M\in
\SL_2(\ZZ))$,
$(\En_d)_{(\twista,\twistb)}=(\En_d)_{(\twistb,\twista)}$ and
$(\Delta_d)_v=(\Delta_d)_{v+v'}$ $(v'\in (h\ZZ)^2)$ for $d\mid h$, it
is obvious that \eqref{Xvexp} indeed satisfies \eqref{Xmodularext} and
\eqref{Xperiod}.  

It should be noted that  the expansion of the form \eqref{Xvexp} is  unique\footnote{This can be seen as follows.
Suppose that we have \eqref{Xvexp}. Then we have
$\wi_{(r,0)}=\sum_{d\mid r,d\mid h}u_d d=\sum_{d\mid r}u_d d$ where we have extended the definition of $u_d$  by setting $u_d=0$ if $d\nmid h$.
Then we have $u_d=\frac{1}{d}\sum_{r\mid d}\mu(r)\wi_{(d/r,0)}$ by the M{\" o}bius inversion formula.}.
Therefore, once either  \eqref{Gammaexp} or  \eqref{Xvexp} is known, the other has to follow. 
The expansion \eqref{Xvexp} turns out to be useful
when we investigate how $\wi_v$ (hence $\Upsilon$) behaves under the orbifoldization in
the next section.

\section{Orbifolds} \label{sec:Orb}

The orbifold elliptic genus is defined by averaging the twisted
elliptic genera \cite{Kawai:1994uq}\footnote{Of course, depending on
  the symmetry of the model, we may consider  more general types of
  orbifold theories and in fact such orbifolds are needed, say, when
  we consider mirror symmetry \`a la Greene-Plesser.  However, we
  restrict ourselves to the most fundamental one  in this
  paper.}:
\begin{equation}
  Z^{\rm orb}(\tau,z)=\inv{h}\sum_{v\in (\ZZ_h)^2} Z_v(\tau,z).
\end{equation}
Just like the ordinary elliptic genus, it obeys
\begin{align}
  Z^{\rm orb}\vert_{0,\hat c/2}M&=Z^{\rm orb},\quad M\in \SL_2(\ZZ),\label{Zorbmodular}\\
  Z^{\rm orb}\|_{\tilde{\mathpzc{D}},\hat c/2}X&=Z^{\rm orb}, \quad
  X\in \Hei(h^{\rm orb}\ZZ)\label{Zorbperiod}
\end{align}
or
\begin{equation}
  Z^{\rm orb}\vert_{0,\tilde{\mathpzc{D}},\hat c/2} [M,X] =Z^{\rm orb},\quad     [M,X]\in G^J_{h^{\rm orb}}
\end{equation}
where $h^{\rm orb}=h/\GCD{\delta,h}$.  
We see that \eqref{Zorbmodular}
follows from \eqref{Zvmodular}. To prove \eqref{Zorbperiod} we need to
use \eqref{Zvadd}.  As before, we introduce $\chi_y^{\rm orb}$ by
$Z^{\rm orb}(\tau,z)=y^{-{\hat c}/{2}}\chi_y^{\rm orb}+O(q)$.

By definition  $h^{\rm orb}$ is some divisor of $h$,  but  several extreme cases are noteworthy. If $\delta=0$ (or $\hat
c={\mathpzc{D}}$) we have $h^{\rm orb}=1$. In this case, our orbifold
procedure corresponds to Gepner's method for achieving charge
integrality.  Sometimes, the orbifold procedure can be regarded as the
mirror transformation.  Namely $Z^{\rm orb}$ may be interpreted as the
elliptic genus of the mirror theory within a sign.  For this to make
sense, it is necessary to have $h^{\rm orb}=h$ since the mirror transformation 
 must be involutive.  For instance, the relation  $h^{\rm
  orb}=h$ is obviously satisfied when $\delta=\pm 1$.  Later we will
discuss these extreme cases in more details taking examples from LG
models.

One can also consider the twisted version $Z^{\rm orb}_v$ which
apparently obeys the same functional equations as $Z_v$ except one has
to replace $h$ by $h^{\rm orb}$.  By using \eqref{Zvadd} we see that
\begin{equation}
  Z^{\rm orb}_v(\tau,z)=\inv{h}\sum_{v' \in (\ZZ_h)^2}\zeta_h^{(v\wedge v')\delta}Z_{v+v'}(\tau,z).
\end{equation}
From this it is obvious that $Z^{\rm orb}_v$ also satisfies Assumption
\ref{tWgenus}.  Thus, for $ \wi^{\rm orb}_v:=Z_v^{\rm orb }(\tau,0)$,
we have
\begin{equation}
  \wi^{\rm orb}_v=
  \inv{h}\sum_{v' \in (\ZZ_h)^2}\zeta_h^{(v\wedge v')\delta}\wi_{v+v'}. \label{Xorbv}
\end{equation}
Here we claim that
\begin{equation}
  \wi^{\rm orb}_v=\sum_{d\mid h}u_{h/d} \GCD{\delta,d}\Bigl(\En_{{d}/{\GCD{\delta,d}} }\Bigr)_v\label{Xorbvexp}
\end{equation}
which implies in particular $\wi^{\rm orb}_v\in \ZZ$ and
\begin{equation}
  \wi^{\rm orb}:=\wi^{\rm orb}_{(0,0)}=\sum_{d\mid h}u_{h/d}\,d.
\end{equation}
To see this, we expand $\wi_{v+v'}$ in \eqref{Xorbv} by using
$\wi_{v+v'}\allowbreak=\allowbreak \sum_{d\mid h}u_{h/d}
(\En_{h/d})_{v+v'}$.  Then it suffices to show that
\begin{equation}
  \frac{1}{h}\sum_{v'\in (\ZZ_h)^2}\zeta_h^{(v\wedge v')\delta}(\En_{h/d})_{v+v'}
  =\GCD{\delta,d}\Bigl(\En_{{d}/{\GCD{\delta,d}}}\Bigr)_v.\label{tmpex1}
\end{equation}
Since $(\En_{h/d})_{v+v'}$ is equal to $h/d$ if $v+v'=(h/d) v''$ for
$v''\in (\ZZ_{d})^2$ or $0$ otherwise, we have
\begin{equation}
  \frac{1}{h}\sum_{v'\in (\ZZ_h)^2}\zeta_h^{(v\wedge v')\delta}(\En_{h/d})_{v+v'}
=\frac{1}{d}\sum_{v''\in (\ZZ_{d})^2}\zeta_{d}^{(v\wedge v'')\delta}.
\end{equation}
Then \eqref{tmpex1} readily follows from \eqref{deltadef}.  

Since $h^{\rm
  orb}/(d/{\GCD{\delta,d}})=\LCM{\delta,h}/\LCM{\delta,d}$, we see that 
$d/{\GCD{\delta,d}}$ is a divisor of $h^{\rm orb}$  if  $d\mid h$.
 So
\eqref{Xorbvexp} can be recast in the form
\begin{equation}
  \wi_v^{\rm orb}=\sum_{d\mid h^{\rm orb}}u_d^{\rm orb} (\En_d)_v
\end{equation}
with an appropriate choice of integers $u^{\rm orb}_d$. For instance,
if $\delta=0$ for which $h^{\rm orb}=1$ we have $u_1^{\rm
  orb}=\wi^{\rm orb}=\sum_{d\mid h}u_{h/d}\,d$ while if $\delta=\pm 1$
for which $h^{\rm orb}=h$, we have $u^{\rm orb}_d=u_{h/d}$.

Previously, we defined $\bfL$ from the data of $\chi_y$ and studied
its relation to $\wi_v$. We may similarly associate $\bfL^{\rm orb}$
with $\chi_y^{\rm orb}$. Then $\bfL^{\rm orb}$ has similarly to be
related to $\wi_v^{\rm orb}$.  So we should have
\begin{equation}
  \bfL^{\rm orb}=\sum_{d\mid h^{\rm orb}}u_d^{\rm orb} \Lambda_d
\end{equation}
or
\begin{equation}
  \bfL^{\rm orb}=\sum_{d\mid h}u_{h/d} \GCD{\delta,d}\Lambda_{{d}/{\GCD{\delta,d}} }.\label{Gammaorbexp}
\end{equation}

\section{Some criteria for the equality of two elliptic
  genera}\label{sec:ASD}
 
Suppose that there are two $\mathcal{N}=2$ SCFTs  that are suspected to
be equivalent.  Denote respectively their elliptic genera, twisted
Witten indices and $\chi_y$-genera by $Z^{(i)}(\tau,z)$, $\wi_v^{(i)}$
and $\chi_y^{(i)}$ $(i=1,2)$.  Suppose furthermore that
$Z^{(1)}(\tau,z)$ and $Z^{(2)}(\tau,z)$ are known explicitly and are
confirmed to satisfy \eqref{eg1} and \eqref{eg2} with the same $\hat c$ and $h$.  
Then, it is natural to ask if $Z^{(1)}(\tau,z)=Z^{(2)}(\tau,z)$
identically.

With regard to this issue, we recall a frequently-used approach which is best suited when $\hat c
h^2\in \ZZ_{\ge 0}$ is relatively small.  If we put
$\phi^{(i)}(\tau,z')=Z^{(i)}(\tau,hz')$ then $\phi^{(i)}(\tau,z')$ is
a weak Jacobi form of weight zero and index $\hat c h^2/2$ (possibly
with multipliers).  The structure theorem over $\CC$ on the ring of
weak Jacobi forms with even weights and integral indices was 
proved in \cite[Theorem 9.3]{Eichler:1985kx}.  It is not difficult to
extend this theorem for the case with indices taking their values in
$\ZZ/2$.  To apply this sort of structure theorem to the present
problem we only have to check whether $\phi^{(1)}(\tau,z')$ and
$\phi^{(2)}(\tau,z')$ have the same $q$-expansions up to a certain
order.  However, the disadvantage of this approach is that as $\hat c
h^2$ becomes large, the necessary order for the $q$-expansions also
becomes large and we may not be able to carry out the test in
practice. Notice also that the approach is quite general but rather
weak since it does not utilize the properties of the elliptic genera
imposed by Assumption \ref {tWgenus}.

Here we explain another approach which also suffers from its own
limitation but is especially powerful when combined with Assumption
\ref{tWgenus} and applied to the cases with $\hat c<1$.  It employs
the fundamental lemma of Atkin and Swinnerton-Dyer
\cite{Atkin:1954lr}:
\begin{lem} Fix a constant    $q$ with $0<\abs{q}<1$ and set
  $\mathcal{A}=\{x\in \CC \mid \abs{q}<\abs{x}\le 1\}$.
  Suppose that a meromorphic function $\mathcal{F}(x)$ on $\CC^\times$
  satisfies $\mathcal{F}(qx)=K x^{-n}\mathcal{F}(x)$ for some integer
  $n$ and some constant $K$.  Then either $\mathcal{F}(x)$ has
  exactly $n$ more zeros than poles in $\mathcal{A}$ or
  $\mathcal{F}(x)$ vanishes identically.
\end{lem}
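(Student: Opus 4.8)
The plan is to read off the zero--pole count of $\mathcal{F}$ from the argument principle applied to the boundary of a fundamental annulus for the multiplicative action $x\mapsto qx$ on $\CC^\times$. We may assume $\mathcal{F}\not\equiv 0$, since otherwise there is nothing to prove. As $\mathcal{F}$ is meromorphic on $\CC^\times$, its zeros and poles are isolated and can accumulate only at $0$ or $\infty$, so the compact set $\{|q|\le\abs{x}\le 1\}$ contains only finitely many of them, and the difference $N_0-N_\infty$ of zeros and poles lying in $\mathcal{A}$ (counted with multiplicity) is a well-defined finite integer. The first fact I would record is that the divisor of $\mathcal{F}$ is invariant under $x\mapsto qx$: comparing local orders in $\mathcal{F}(qx)=Kx^{-n}\mathcal{F}(x)$ shows that $\mathcal{F}$ has the same order at $x_0$ and at $qx_0$, since the multiplier $Kx^{-n}$ is holomorphic and nonvanishing on $\CC^\times$. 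Hence every orbit meets the half-open annulus $\mathcal{A}$ in exactly one point, and $N_0-N_\infty$ may be computed on any fundamental annulus.

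Next I would evaluate $N_0-N_\infty=\frac{1}{2\pi\sqrt{-1}}\oint_{\partial\mathcal{A}}\frac{\mathcal{F}'(x)}{\mathcal{F}(x)}\,dx$, where $\partial\mathcal{A}$ is the outer circle $\abs{x}=1$ traversed counterclockwise together with the inner circle $\abs{x}=\abs{q}$ traversed clockwise; with the standard annular orientation this means $\oint_{\partial\mathcal{A}}=\int_{\abs{x}=1}-\int_{\abs{x}=\abs{q}}$. Writing $g=\mathcal{F}'/\mathcal{F}$, logarithmic differentiation of the functional equation yields $q\,g(qx)=g(x)-n/x$. In the inner integral the substitution $x=qw$ carries $\abs{x}=\abs{q}$ to the unit circle and replaces $\int_{\abs{x}=\abs{q}}g(x)\,dx$ by $\int_{\abs{w}=1}q\,g(qw)\,dw=\int_{\abs{w}=1}(g(w)-n/w)\,dw=\int_{\abs{w}=1}g(w)\,dw-2\pi\sqrt{-1}\,n$. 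The two copies of $\int_{\abs{w}=1}g$ cancel in the difference, leaving $\oint_{\partial\mathcal{A}}g\,dx=2\pi\sqrt{-1}\,n$ and hence $N_0-N_\infty=n$, exactly as claimed.

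The one technical point that requires care — and which I expect to be the main, though routine, obstacle — is the possibility that $\mathcal{F}$ has zeros or poles on the bounding circles $\abs{x}=1$ or $\abs{x}=\abs{q}$, where the contour integral is not literally defined. I would dispose of this by replacing the radii $1$ and $\abs{q}$ with $r$ and $\abs{q}r$ for a generic $r\in(\abs{q},1)$: since only finitely many zeros and poles lie in $\{|q|\le\abs{x}\le 1\}$, only finitely many radii are excluded, and for any admissible $r$ the annulus $\{\abs{q}r<\abs{x}\le r\}$ is again a fundamental domain carrying the same value of $N_0-N_\infty$ by the orbit-invariance established in the first step. Running the contour computation on this perturbed annulus then produces the stated identity unconditionally.
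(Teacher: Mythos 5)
Your proof is correct, but there is nothing in the paper to compare it against: the paper does not prove this lemma, quoting it verbatim as the ``fundamental lemma of Atkin and Swinnerton-Dyer'' with a citation to their 1954 paper. Your argument is the standard one for such quasi-periodic functions (it is essentially the classical zero-counting argument for theta functions, cf.\ the proof of Theorem 1.2 in Eichler--Zagier): the orbit-invariance of the divisor under $x\mapsto qx$, the reduction to a generic fundamental annulus whose bounding circles avoid all zeros and poles, and the contour computation via $q\,g(qx)=g(x)-n/x$ with $g=\mathcal{F}'/\mathcal{F}$ are all handled correctly, including the orientation conventions. One small point you should make explicit: your assertion that the multiplier $Kx^{-n}$ is nonvanishing on $\CC^\times$ presupposes $K\neq 0$. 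This costs nothing, since if $K=0$ the functional equation gives $\mathcal{F}(qx)=0$ for all $x$, hence $\mathcal{F}\equiv 0$, which is the second alternative of the lemma; so under your standing assumption $\mathcal{F}\not\equiv 0$ one automatically has $K\neq 0$, and the rest of your argument goes through as written.
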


\begin{prop}\emph{(\emph{cf.}~\cite[Theorem 1.2]{Eichler:1985kx}.)}
 The elliptic genus $Z(\tau,z)$ of a $\mathcal{N}=2$ SCFT either has exactly $\hat c h^2$ zeros in
$\mathcal{P}:=\{s+t\tau \mid (s,t)\in[0,h)^2\}$ as a function of $z$ or
vanishes  identically.
\end{prop}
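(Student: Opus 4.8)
The plan is to deduce the statement from the preceding lemma of Atkin and Swinnerton-Dyer by converting $Z(\tau,z)$, at a fixed $\tau\in\HH$, into a single-valued holomorphic function of one variable on $\CC^\times$. First I would set $x=\bfe{z/h}=y^{1/h}$ and try to define $\mathcal{F}(x)=Z(\tau,z)$. The point to verify is that $Z(\tau,\cdot)$ really is $h$-periodic in $z$: taking $(\twista,\twistb)=(0,h)$ in \eqref{eg2} gives $Z(\tau,z+h)=(-1)^{\hat c h}Z(\tau,z)$, so descent to $\CC^\times$ requires $\hat c h$ to be even. This holds automatically in our setting, because $\mathfrak{Z}_0(z)=y^{-\hat c/2}\chi_y$ must lie in $\ZZ[y^{1/h},y^{-1/h}]$ while $\chi_y\in\ZZ[y^{1/h}]$; comparing exponents forces $-\hat c/2\in\tfrac1h\ZZ$, that is $\hat c h\in 2\ZZ$. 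Thus $\mathcal{F}$ is a well-defined holomorphic function on $\CC^\times$, with no poles since $Z$ is holomorphic on $\HH\times\CC$.

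Next I would compute how $\mathcal{F}$ behaves under $x\mapsto qx$ with $q=\bfe{\tau}$, which corresponds to $z\mapsto z+h\tau$, that is to $(\twista,\twistb)=(h,0)$ in \eqref{eg2}. Using $(-1)^{\hat c h}=1$ once more together with the identity $\bfe{-\hat c h z}=x^{-\hat c h^2}$ (legitimate because $\hat c h^2=\mathpzc{D}h^2-2\delta h\in\ZZ$), a one-line substitution yields
\begin{equation*}
  \mathcal{F}(qx)=\bfe{-\tfrac{\hat c h^2\tau}{2}}\,x^{-\hat c h^2}\,\mathcal{F}(x).
\end{equation*}
This is exactly the hypothesis of the lemma with $n=\hat c h^2\in\ZZ_{\ge0}$ and the nonzero constant $K=\bfe{-\hat c h^2\tau/2}$. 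The lemma then supplies the dichotomy: on the annulus $\mathcal{A}$ the holomorphic, hence pole-free, function $\mathcal{F}$ has exactly $\hat c h^2$ zeros, unless $\mathcal{F}$ --- equivalently $Z(\tau,z)$ --- vanishes identically.

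It remains to transport the zero count from $\mathcal{A}$ back to $\mathcal{P}$. The covering $\pi\colon z\mapsto\bfe{z/h}$ of $\CC^\times$ has deck group $z\mapsto z+h$, and I claim it restricts to a bijection $\mathcal{P}\to\mathcal{A}$. For $z=s+t\tau$ with $s,t\in[0,h)$ one has $\abs{\pi(z)}=\exp(-2\pi t\,\Ima\tau/h)$, which sweeps out $(\abs{q},1]$ as $t$ runs over $[0,h)$, so $\pi(\mathcal{P})\subseteq\mathcal{A}$; injectivity holds because two points of $\mathcal{P}$ with equal image differ by an element of $h\ZZ\subset\RR$, and the $\RR$-linear independence of $1$ and $\tau$ then forces equal $t$ and hence equal $s$; surjectivity follows by first solving for $t$ from $\abs{x}$ and then for $s$ from the argument. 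Since $\pi$ is a local biholomorphism, the zeros of $Z(\tau,\cdot)$ in $\mathcal{P}$ match, with multiplicity, those of $\mathcal{F}$ in $\mathcal{A}$, and the proposition follows.

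I expect the main obstacle to be the bookkeeping rather than any deep idea. One must secure the parity $\hat c h\in 2\ZZ$ so that the passage to $\CC^\times$ is legitimate, and one must keep the half-open boundary conventions of $\mathcal{P}$ and $\mathcal{A}$ aligned so that a zero on the boundary is counted exactly once. The multiplier computation and the verification of the bijection $\mathcal{P}\cong\mathcal{A}$ are then routine.
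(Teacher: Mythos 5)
Your proof is correct and follows essentially the same route as the paper's: set $\mathcal{F}(x)=Z(\tau,z)$ with $x=y^{1/h}$, compute the multiplier under $x\mapsto qx$ from \eqref{eg2}, and invoke the Atkin--Swinnerton-Dyer lemma with $n=\hat c h^2$. The only differences are bookkeeping: the paper gets single-valuedness in $x$ directly from the assumed Fourier expansion $\mathfrak{Z}_n(z)\in\ZZ[y^{1/h},y^{-1/h}]$ (which sidesteps your parity argument, including its implicit reliance on $\chi_y\neq 0$), and it leaves the identification of zeros in $\mathcal{A}$ with zeros in $\mathcal{P}$ implicit, which you spell out.
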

\begin{proof}
Set $\mathcal{F}_Z(x)=Z(\tau,z)$ with $y=x^h$.  Obviously, we have
\begin{equation}
  \mathcal{F}_Z(qx)=(-1)^{\hat c h}q^{-\hat ch^2/2}x^{-\hat c h^2}\mathcal{F}_Z(x).
\end{equation}
Since $\mathcal{F}_Z(x)$ is holomorphic in $\mathcal{A}$ (by our
assumption on $Z(\tau,z)$), the lemma implies that it has exactly
$\hat c h^2$ zeros in $\mathcal{A}$ or is identically zero. 
\end{proof}

\begin{prop}\label{egcriteria}
Given  the elliptic genera  $Z^{(1)}(\tau,z)$ and $Z^{(2)}(\tau,z)$ of  two $\mathcal{N}=2$ SCFTs  
 satisfying \eqref{eg1} and \eqref{eg2}
with the same $\hat c$ and $h$,  we have the following criteria for their equality:
\begin{enumerate}
\item[{\rm (i)}]  If $Z^{(1)}(\tau,z)=Z^{(2)}(\tau,z)$ at more than
    $\hat c h^2$ values of  $z$ in $\mathcal{P}$, then
    $Z^{(1)}(\tau,z)=Z^{(2)}(\tau,z)$ identically. 
\item[{\rm (ii)}] If $\hat c<1$ and $\wi_v^{(1)}=\wi_v^{(2)}$ for all
    $v\in (\ZZ_h)^2$, then $Z^{(1)}(\tau,z)=Z^{(2)}(\tau,z)$
    identically.
\item[{\rm (iii)}] If $\hat c<1$ and $\chi_y^{(1)}=\chi_y^{(2)}$, then
    $Z^{(1)}(\tau,z)=Z^{(2)}(\tau,z)$ identically.
\end{enumerate}
\end{prop}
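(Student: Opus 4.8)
The plan is to handle all three parts through the single difference
$D=Z^{(1)}-Z^{(2)}$, which is holomorphic on $\HH\times\CC$ and, by linearity, satisfies the same functional equations \eqref{eg1} and \eqref{eg2} with the shared $\hat c$ and $h$. The proof of the preceding Proposition uses nothing about $Z$ beyond holomorphy and these two equations, so it applies verbatim to $D$: writing $\mathcal F_D(x)=D(\tau,z)$ with $y=x^h$, one gets $\mathcal F_D(qx)=(-1)^{\hat c h}q^{-\hat c h^2/2}x^{-\hat c h^2}\mathcal F_D(x)$, and the Atkin--Swinnerton-Dyer lemma forces $D$ either to have exactly $\hat c h^2$ zeros in $\mathcal P$ (i.e.\ $\mathcal F_D$ has exactly $\hat c h^2$ zeros in $\mathcal A$) or to vanish identically. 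This disposes of (i) at once: if $Z^{(1)}=Z^{(2)}$ at more than $\hat c h^2$ values of $z$ in $\mathcal P$, then $D$ has strictly more than $\hat c h^2$ zeros, the first alternative is excluded, and $D\equiv 0$.

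For (ii) I would manufacture enough common zeros and invoke (i). Since the shared data $\hat c,h$ (through $\hat c=\mathpzc D-2\delta/h$ and the compatibility of \eqref{eg2} with \eqref{period2}) fix $\tilde{\mathpzc D}$, both theories use the same twisting, so $D_v:=D\|_{\tilde{\mathpzc D},\hat c/2}(v,0)=Z^{(1)}_v-Z^{(2)}_v$ and \eqref{extwist} holds for $D$ directly. From $\wi^{(1)}_v=\wi^{(2)}_v$ on $(\ZZ_h)^2$ together with the periodicity \eqref{Xperiod} (equivalently \eqref{Zvperiodicity}) the equality extends to every $v\in\ZZ^2$, so $D_v(\tau,0)=0$ for all $v=(r,s)$. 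Evaluating \eqref{extwist} at $z=0$ shows $D_{(r,s)}(\tau,0)$ equals $D(\tau,r\tau+s)$ times a nowhere-vanishing prefactor, whence $D(\tau,r\tau+s)=0$ for every $(r,s)\in\ZZ^2$. The $h^2$ representatives $(r,s)\in(\ZZ_h)^2$ give $h^2$ distinct points of $\mathcal P$, and because $\hat c<1$ with $\hat c h^2\in\ZZ$ we have $h^2>\hat c h^2$; thus $D$ has more than $\hat c h^2$ zeros and (i) yields $D\equiv 0$.

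Part (iii) reduces to (ii) once I note that $\chi_y$ already determines every twisted Witten index. The constructions of \S\ref{sec:EG}--\S\ref{sec:TEG} express $\wi_v$ entirely in terms of $\chi_y$ and the shared data: $\chi_y$ fixes $\bfL$ via \eqref{Gammaexp}, hence the integers $u_d$ uniquely, and then all $\wi_v$ through \eqref{Xvexp}. Therefore $\chi_y^{(1)}=\chi_y^{(2)}$ forces $\wi^{(1)}_v=\wi^{(2)}_v$ for all $v$, and (ii) applies.

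I expect the main obstacle to lie in the bookkeeping for (ii): confirming that the lattice translates $r\tau+s$ with $(r,s)\in(\ZZ_h)^2$ are genuinely distinct modulo $(h\ZZ)^2$ inside $\mathcal P$, and that—counted with multiplicity as the lemma requires—they strictly exceed $\hat c h^2$, which is precisely where $\hat c<1$ enters. The passage from $D_v(\tau,0)=0$ to $D(\tau,r\tau+s)=0$ through the non-vanishing prefactor of \eqref{extwist} is routine but should be carried out carefully, after checking that this prefactor, depending only on the common $\hat c$ and $\tilde{\mathpzc D}$, is indeed identical for the two theories.
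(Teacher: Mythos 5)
Your proposal is correct and follows essentially the same route as the paper's proof: part (i) by applying the Atkin--Swinnerton-Dyer lemma to the difference $\mathcal{F}_{Z^{(1)}}-\mathcal{F}_{Z^{(2)}}$, part (ii) by using \eqref{extwist} to convert the equality of twisted Witten indices into equality of $Z^{(1)}$ and $Z^{(2)}$ at the $h^2>\hat c h^2$ points $r\tau+s$, and part (iii) by noting that $\chi_y$ determines $\bfL$, hence the $u_d$, hence all $\wi_v$ via \eqref{Xvexp}. Your extra care about the prefactor (the common $\tilde{\mathpzc{D}}$) and about Assumption \ref{tWgenus} underlying \eqref{Xvexp} is implicit in the paper's one-line treatment of (ii) and (iii), but introduces nothing genuinely different.
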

\begin{proof} 
Put 
$\mathcal{F}(x)=\mathcal{F}_{Z^{(1)}}(x)-\mathcal{F}_{Z^{(2)}}(x)$
where  $\mathcal{F}_{Z^{(i)}}(x)$ $(i=1,2)$ is as in the proof of the previous proposition.
Since $\mathcal{F}(x)$ is holomorphic in $\mathcal{A}$,  the statement  (i) follows from the lemma.
Suppose  that $\hat c<1$. In view of \eqref{extwist},
$\wi_v^{(1)}=\wi_v^{(2)}$ $(v\in (\ZZ_h)^2)$ implies
$Z^{(1)}(\tau,z)=Z^{(2)}(\tau,z)$ at $h^2(>\hat c h^2)$ integral
points $\{r\tau+s\mid r,s=0,1,\dots,h-1\}\subset \mathcal{P}$. Hence we obtain
(ii) from (i).  The relation we studied between $\wi_v$ and $\chi_y$  implies that
(iii) is a consequence of (ii) (so long as we can check our Assumption
\ref{tWgenus}).
 \end{proof}
 
\section{LG models and their orbifolds} \label{sec:LG}

We next turn to LG models and their orbifolds to illustrate our general formalism.
We start by  fixing  some notation. We introduce the normalized Jacobi theta function $\vartheta$ on $
\HH\times\CC$ by
\begin{equation}
  \vartheta(\tau,u)=(x^{-1/2}-x^{1/2})\prod_{n=1}^\infty \frac{(1-q^n x)(1-q^n x^{-1})}{(1-q^n)^2}
\end{equation}
where $x=\bfe{u}$. This has no poles but simple zeros at
$\twista\tau+\twistb$ $(\twista,\twistb\in\ZZ)$ as a function of $u$
and obeys
\begin{align}
  \vartheta\vert_{-1,\frac{1}{2}}M&=\vartheta,\quad  M\in \SL_2(\ZZ),\\
  \vartheta\|_{1,\frac{1}{2}} X&=\vartheta,\quad X\in \Hei(\ZZ).
\end{align}

Let $A=\CC[x_1,\dots,x_n]$ be the ring of polynomials. Consider a
weighted homogeneous polynomial $f\in A$ satisfying $f(0,\dots,0)=0$
and
\begin{equation}\label{scaling}
  f(t^{\weight_{1}}x_1,\dots,t^{\weight_n}x_n)=tf(x_1,\dots,x_n),\quad t\in \CC^\times
\end{equation}
where the weight $\weight_i$  are positive rational numbers. We
express $\weight_i$ as an irreducible fraction $b_i/a_i$ so that
$a_i,b_i>0$ and $\GCD{a_i,b_i}=1$.  Set $N= \LCM{a_1,\dots,a_n}$.  Let
$I_f\subset A$ be the ideal generated by the partial derivatives
$\partial_1 f,\dots,\partial_nf$.  We assume that $(0,\dots,0)\in
\CC^n$ is the unique common zero of $\partial_1 f,\dots,\partial_nf$.

The elliptic genus of the $\mathcal{N}=2$ LG model with superpotential
$f$ is given by
\begin{equation}\label{LGeg}
  Z(\tau,z)=\prod_{i=1}^n\frac{\vartheta(\tau,(1-\weight_i)z)}{\vartheta(\tau,\weight_i z)}.\end{equation}
This is a straightforward generalization of Witten's result \cite{Witten:1994fj} for the simplest case 
$f(x_1)=x_1^{h}$ with $\weight_1=1/h$.
Behind this expression is
the so-called $bc\beta\gamma$ realization of $\mathcal{N}=2$ SCA
\cite{Fre:1991yq,Fre:1992kx,Witten:1994fj,Kawai:1994uq}.
It is easy to see that \eqref{LGeg} obeys the transformation laws \eqref{eg1} and \eqref{eg2} with
$\hat c=\sum_{i=1}^n(1-2\weight_i)$ and an appropriate choice of  $h$. The actual value of $h$ depends on
the detail of $f$ but we should have $h \mid N$. We may then take $\mathpzc{D}$ and $\delta$ so that
$\mathpzc{D}\equiv n \pmod{2}$ and $\delta/h- \sum\omega_i \in \ZZ$.
One can also readily confirm   that  \eqref{LGeg} has only removable singularities and is in fact holomorphic.
It has $\sum(1-\omega_i)^2h^2-\sum\omega_i^2h^2=\hat c h^2$ zeros in $\mathcal{P}$ as expected.

The chiral ring $\mathcal{R}$ is nothing but the Jacobi ring
$J_f:=A/I_f$ which has the grading $J_f=\oplus_\nu(J_f)_\nu$ induced
from \eqref{scaling}.  The $\chi_y$-genus is simply the Poincar{\' e}
polynomial of $J_f$:
\begin{equation}\label{LGchiy}
  \chi_y=\sum_\nu \dim (J_f)_\nu\, y^\nu=\prod_{i=1}^n\frac{1-y^{1-\weight_i}}{1-y^{\weight_i}}\,.
\end{equation}
The Witten index is equal to the Milnor number:
$\wi=\prod_{i=1}^n\left(\frac{1}{\weight_i}-1\right)$.  On the other
hand the space of Ramond ground states $\mathcal{V}_{0}$ should rather
be identified with $\Omega_A^n/\mathrm{d}f\wedge\Omega_A^{n-1}$ where
$\Omega_A^\bullet$ is the complex of K{\" a}hler differentials.  The
$U(1)$ charge assignments for $x_i$ and $\mathrm{d}x_i$ are
respectively $\weight_i$ and $\weight_i-1/2$.  Thus
$\mathrm{d}x_1\wedge\dots\wedge \mathrm{d}x_n$ has the $U(1)$ charge
$\sum_i(\weight_i-1/2)=-{\hat c}/{2}$. This explains the necessary
shift of the $U(1)$ charge under the spectral flow.

Before turning to the explicit formula of $\bfL$, let us observe that,
in the present case, we have $\bfL=\Div(\Phi(t))$ where $\Phi(t)$ is
the characteristic polynomial of the Milnor monodromy
\cite{Milnor:1968lr}.  To see this, set $ l(\nu)=\nu+\sum\omega_i-1 $
for any $\nu\in \QQ$.  Let $\{\phi_i\}$ be a basis of $J_f$ as a
$\CC$-vector space such that $\phi_i\in (J_f)_{\nu_i}$.  Since the
$(n-1)$-th cohomology group of the Milnor fiber is spanned by the
Gel'fand-Leray forms $\{\phi_i\, \mathrm{d}x_1\wedge\cdots\wedge
\mathrm{d}x_n/\mathrm{d}f\}$, the (cohomological) Milnor monodromy has
the spectrum $\{\bfe{l(\nu_i)}\}$
\cite{Steenbrink:1977fk,Arnold:1988qy}.  Consequently, we have
\begin{equation}
  \Div(\Phi(t))=\sum_\nu \dim(J_f)_\nu\vev{\bfe{l(\nu)}}.
\end{equation}
Then this is easily seen to coincide with our definition of $\bfL$:
\begin{equation}
  \bfL=\vev{\bfe{\delta/h}}
  \sum_\nu \dim(J_f)_\nu\vev{\bfe{\nu}}.
\end{equation}

In practice, we can calculate $\bfL$ from $\chi_y$ following the trick
of Orlik and Solomon \cite{Orlik:1977lr}.  Clearly, we should have
\begin{equation}
  \bfL=\vev{\bfe{\delta/h}}
  \lim_{y\to 1}\prod_{i=1}^n\frac{1-\vev{\bfe{-\weight_i}}y^{1-\weight_i}}{1-\vev{\bfe{\weight_i}}y^{\weight_i}}
  =\lim_{y\to 1}\prod_{i=1}^n\frac{\vev{\bfe{\weight_i}}-y^{1-\weight_i}}{1-\vev{\bfe{\weight_i}}y^{\weight_i}}.
\end{equation}
The evaluation of the limit is rather subtle, but if we notice that
\begin{equation}\label{factor}
  \begin{split}
    \frac{\vev{\bfe{\weight_i}}-y^{1-\weight_i}}{1-\vev{\bfe{\weight_i}}y^{\weight_i}}&=y^{-\weight_i}\left(\frac{1-y}{1-\vev{\bfe{\weight_i}}y^{\weight_i} }-1\right)\\
    &=y^{-\weight_i}\left(\frac{1-y}{1-y^{b_i}}\sum_{k=0}^{a_i-1}\vev{\bfe{k\weight_i}}y^{k\weight_i}
      -1\right)
  \end{split}
\end{equation}
one finds that
\begin{equation}\label{MO}
  \bfL=\prod_{i=1}^n\left(\inv{b_i}\Lambda_{a_i}-1\right).
\end{equation}
This is precisely the formula of Milnor and Orlik \cite{Milnor:1970fk}
for $\Div(\Phi(t))$ in accordance with our anticipation.  By
repeatedly employing a useful formula \cite{Milnor:1970fk}
\begin{equation}\label{mformula}
  \Lambda_a\Lambda_b=\GCD{a,b}\Lambda_{\LCM{a,b}}
\end{equation}
one obtains the expansion \eqref{Gammaexp} with
\begin{equation}
  u_1=(-1)^n\quad\text{and}\quad
  u_d= \frac{1}{d}\sum_{\LCM{a_{i_1},\ldots,a_{i_\ell}}=d}\frac{(-1)^{n-\ell}}{\weight_{i_1}\cdots\weight_{i_\ell}}\quad \text{for $d>1$.}
  \label{udLG}
\end{equation}
Note that $u_d$ given by \eqref{udLG} has to vanish for a divisor $d$
of $N$ greater than $h$.
\begin{rem}
  The work of A'Campo \cite{ACampo:1975fj} enables us to interpret the
  expansion \eqref{Gammaexp} in terms of the data associated with the
  resolution of the singularity.  On the other hand, from the
  perspective of $\mathcal{N}=2$ SCFTs, \eqref{Gammaexp} is related to
  the Coulomb gas decomposition in the case of $\mathsf{ADE}$ minimal models (cf.~\eqref{Omegadecomp}).  We pointed out this parallelism
  before  \cite{Kawai:1992kx} but a satisfactory understanding is
  still lacking.
\end{rem}

Since one can check Assumption \eqref{tWgenus}, our previous argument
implies that $\wi_v$ must have the expansion \eqref{Xvexp} with the
$u_d$ given by \eqref{udLG}.  This can also be seen in a direct
computation.  Indeed, by substituting \eqref{LGeg} into
\eqref{extwist} it is straightforward to show
\begin{equation}
  \wi_v=(-1)^n\prod_{i\in \mathcal{S}_v} \left(1-\inv{\weight_i}\right)
\end{equation}
where $\mathcal{S}_v$ is defined by $\mathcal{S}_{(r,s)}=\{i\in\{1,\dots,n\} \mid \text{$\twista\weight_i
  \in\ZZ$ and $\twistb\weight_i \in \ZZ$}\}$.  Then we can easily expand this expression to
obtain the expected result.  It should be noted that we have the (quasi) periodicity:
\begin{equation}
  \wi_v[f+x_{n+1}^2]=\varepsilon((v,0))\wi_v[f],\quad 
  \wi_v[f+x_{n+1}^2+x_{n+2}^2]=\wi_v[f].
\end{equation}

As we have already seen,  we are bound to have \eqref{Xorbvexp} and
\eqref{Gammaorbexp} for the orbifold theory.  Nevertheless,
here we prove \eqref{Gammaorbexp} directly for the sake of
completeness.  By explicitly calculating the expansion
$Z_v(\tau,z)=y^{-\hat c/2}(\chi_y)_v+O(q)$ one finds
\cite{Kawai:1995nx} that
\begin{equation}
  (\chi_y)_v=\prod_{i\not \in S_\twista}-y^{\inv{2}(1-2\weight_i)-(\!(\weight_i\twista)\!)}
  \prod_{i\in S_\twista}
  \frac{\bfe{\weight_i\twistb}-y^{1-\weight_i}}{1-\bfe{\weight_i\twistb}y^{\weight_i}}
\end{equation}
where we have set $S_\twista=\{i\mid \weight_i\twista\in \ZZ\}$ and $(\!(\weight
)\!)=\weight-\lfloor \weight\rfloor-1/2$ with $\lfloor \omega\rfloor=\max\{n\in \ZZ\mid n\le \omega\}$.
 Therefore,
\begin{equation}
  \chi_y^{\rm orb}=\frac{1}{h}\sum_{v \in (\ZZ_h)^2}(\chi_y)_v.
\end{equation}
This reproduces Vafa's  expression for  $\chi_y^{\rm orb}$  
\cite{Vafa:1989gd} originally found by a judicious physical argument but without
recourse to the elliptic genus.  As before, one can calculate
$\bfL^{\rm orb}$ from this expression of $\chi_y^{\rm orb}$ as
\begin{equation}\label{Phiorb}
  \bfL^{\rm orb}=\frac{1}{h}\sum_{v \in (\ZZ_h)^2}\bfL_v
\end{equation}
where
\begin{equation}
  \bfL_{(r,s)}=\vev{\zeta^{-\delta\twista}_h}\lim_{y\to 1}\prod_{i\not\in S_\twista }-y^{\inv{2}(1-2\weight_i)-(\!(\weight_i\twista)\!)}
  \prod_{i\in S_\twista}
  \frac{\bfe{\weight_i\twistb}\vev{\bfe{\weight_i}}-y^{1-\weight_i}}{1-\bfe{\weight_i\twistb}\vev{\bfe{\weight_i}}y^{\weight_i}}.
\end{equation}
To evaluate the limit we use as before
\begin{equation}
  \begin{split}
    \frac{\bfe{\weight_i\twistb}\vev{\bfe{\weight_i}}-y^{1-\weight_i}}{1-\bfe{\weight_i\twistb}\vev{\bfe{\weight_i}}y^{\weight_i}}\
    &=y^{-\weight_i}\left(\frac{1-y}{1-\bfe{\weight_i\twistb}\vev{\bfe{\weight_i}}y^{\weight_i} }-1\right)\\
    &=y^{-\weight_i}\left(\frac{1-y}{1-y^{b_i}}\sum_{k=0}^{a_i-1}\bfe{\weight_i\twistb
        k}\vev{\bfe{\weight_ik}}y^{\weight_ik} -1\right).
  \end{split}
\end{equation}
Then it turns out that
\begin{equation}
  \bfL_{(r,s)}=(-1)^n\vev{\zeta^{-\delta\twista}_h} \prod_{i\in S_\twista}\left(1-\inv{b_i}
    \sum_{k=0}^{a_i-1}\bfe{\weight_i\twistb k}\vev{\bfe{\weight_i k}} \right).
\end{equation}
By plugging this into \eqref{Phiorb} and expanding the resulting
expression, we obtain
\begin{equation}
  \begin{split}
    \bfL^{\rm orb}&=\frac{(-1)^n}{h}\sum_{(\twista,\twistb)\in (\ZZ_h)^2}\vev{\zeta^{-\delta\twista}_h}\sum_\ell\sum_{i_1,\ldots,i_\ell\in S_\twista}\frac{(-1)^\ell}{b_{i_1}\cdots b_{i_\ell}}\\
    &\times
    \sum_{k_{1}=0}^{a_{i_1}-1}\cdots\sum_{k_{\ell}=0}^{a_{i_\ell}-1}
    \mathbf{e}[(\sum_{j=1}^\ell \weight_{i_j}k_j)\twistb] \vev{
      \mathbf{e}[\sum_{j=1}^\ell \weight_{i_j}k_j]}.
  \end{split}
\end{equation}
We then perform the sum over $\twistb$ to find
\begin{equation}
  \bfL^{\rm orb}=\sum_{d\mid h}u_d\sum_{\substack{\twista\in \ZZ_h\\d\mid \twista}}\vev{\zeta^{-\delta \twista}_h}=\sum_{d\mid h}u_{h/d}\sum_{\substack{\twista\in \ZZ_h\\ (h/d)\mid \twista}}\vev{\zeta^{-\delta \twista}_h}
\end{equation}
where the $u_d$ are as given by \eqref{udLG}.  We can confirm without
difficulty that
\begin{equation}
  \sum_{\substack{\twista\in \ZZ_h\\ (h/d)\mid \twista}}\vev{\zeta^{-\delta \twista}_h}=\GCD{\delta,d}\Lambda_{{d}/{\GCD{\delta,d}}}\quad\text{if  $d\mid h$}.
\end{equation}
So we have completed the proof of \eqref{Gammaorbexp}.

\section{Some LG examples with {\boldmath $\delta=0,\pm 1$} and their
  dualities} \label{sec:duality}

The orbifold theories of some $\mathcal{N}=2$ SCFTs have chances to be
equivalent or at least closely related to some other known
$\mathcal{N}=2$ SCFTs. In order to appreciate  what we
have learnt so far, we give below some examples of LG models with
$\delta=0,\pm 1$ for which $(-1)^{\mathpzc{D}}Z^{\rm orb}(\tau,z)$ is
(expected to be) equal to the elliptic genus of another familiar
$\mathcal{N}=2$ SCFT\footnote{The sign here is purely conventional and
  one may prefer to include this in the definition of $Z^{\rm orb}$.}.

First, consider a LG model with $\delta=0$. Via the well-recognized  CY/LG
correspondence we expect that $(-1)^{\mathpzc{D}}Z^{\rm orb}(\tau,z)$
is equal to the elliptic genus of the $\mathcal{N}=2$ sigma model with
its target CY $\mathpzc{D}$-fold being (a resolution of) the
hypersurface $f=0$ in the pertinent weighted projective space. As
mentioned in \S \ref{sec:ASD}, this can be confirmed, if $\mathpzc{D}$
is not so large, by computing and comparing several terms in the
$q$-expansions of $(-1)^{\mathpzc{D}}Z^{\rm orb}(\tau,z)$ and the
elliptic genus of the $\mathcal{N}=2$ sigma model.  Since $h^{\rm
  orb}=1$, we have $\Upsilon^{\rm orb}=\wi^{\rm orb}\Lambda_1$ with
$\wi^{\rm orb}=\sum_{d\mid h}u_{h/d}d$ and $(-1)^{\mathpzc{D}}\wi^{\rm
  orb}$ has to be the Euler characteristic of the CY
$\mathpzc{D}$-fold.  For instance, consider the Fermat type
superpotential $f=x_1^n+\cdots+x_n^n$ for which we can take
$\mathpzc{D}=n-2$ and $\delta=0$.  Then,
\begin{equation}
  \Upsilon=\prod_{i=1}^n(\Lambda_n-1)^n=\frac{(n-1)^n-(-1)^n}{n}\Lambda_n+(-1)^n\Lambda_1.
\end{equation}
Hence we obtain $\Upsilon^{\rm
  orb}=(-1)^n\left(n+\frac{(1-n)^n-1}{n}\right) \Lambda_1$.  This
predicts $\chi(V_n)=n+\frac{(1-n)^n-1}{n}$ for the CY manifold
$V_n:=\{f=0\} \subset \PP^{n-1}$, which is indeed true.

Next we discuss several (well-known) LG models with $\delta=\pm 1$ and
explain how the duality or mirror phenomena observed long before in
\cite{Kawai:1992kx} can be understood from the vantage viewpoint of
the present work.  The LG models we consider are in three variables
and are distinguished by their types $\mathsf{T}$. We express their
weights $\weight_i$ as $d_i/h$ for $i=1,2,3$.

We say that $\mathsf{T}$ is {\it simple\/} or {\em of $\mathsf{ADE}$
  type\/} if it is given by the following data:
\begin{equation}\label{ADEtypes}
  \left.\begin{array}{llll} \mathsf{T}& (h,d_1,d_2,d_3)\qquad&f \\[1mm]\hline \mathsf{A}_{h-1}&(h,1,\frac{h}{2},\frac{h}{2}) & x_1^h+x_2^2+x_3^2\\[1mm] 
      \mathsf{D}_{\frac{h}{2}+1}\quad&(h, 2,\frac{h}{2}-1,\frac{h}{2})&x_1^{\frac{h}{2}}+x_1x_2^2+x_3^2\\[1mm]  
      \mathsf{E}_6&(12,3,4,6) &x_1^4+x_2^3+x_3^2\\[1mm]
      \mathsf{E}_7& (18,4,6,9)&x_1^3x_2+x_2^3+x_3^2\\[1mm]
      \mathsf{E}_8 & (30,6,10,15)&x_1^5+x_2^3+x_3^2\end{array}\right.
\end{equation}
where we assume that $h\ge 2$ for $\mathsf{A}_{h-1}$ and $h\ge 6$,
$h\in 2\ZZ$ for $\mathsf{D}_{\frac{h}{2}+1}$ and $h$ is the Coxeter
number of the Lie algebra corresponding to $\mathsf{T}$.  We have
$\hat c=1-2/h$ so that we may take $\mathpzc{D}=1$ and $\delta=1$. We
also see that $\wi=\rk$ where $\rk$ is the rank of the Lie algebra and that
$\chi_y=\sum_{i=1}^{\rk}t^{m_i-1}$ with $y=t^{1/h}$ where  the $m_i$ are
the Coxeter exponents:
\begin{equation}
  \left.\begin{array}{lll} 
      \mathsf{T} & m_i  \\\hline
      \mathsf{A}_{h-1}&1,2,\dots,h-1\\
      \mathsf{D}_{\frac{h}{2}+1}\quad&1,3,\dots,h-3,h-1,\frac{h}{2}\\
      \mathsf{E}_6& 1,4,5,7,8,11\\
      \mathsf{E}_7&1,5,7,9,11,13,17\\
      \mathsf{E}_8 &1,7,11,13,17,19,23,29\end{array}\right. 
\end{equation}
With the help of \eqref{MO} it is easy to see that
\begin{equation}\label{ADEexp}
  \left.\begin{array}{ll} \mathsf{T} & \bfL  \\\hline \mathsf{A}_{h-1}&\Lambda_h-\Lambda_{1}   \\  \mathsf{D}_{\frac{h}{2}+1}\quad&  \Lambda_{h}-\Lambda_{\frac{h}{2}}+\Lambda_{2}-\Lambda_1 \\  \mathsf{E}_6&  \Lambda_{12}-\Lambda_6-\Lambda_4+\Lambda_3+\Lambda_2-\Lambda_1 \\  \mathsf{E}_7& \Lambda_{18}-\Lambda_{9}-\Lambda_{6}+\Lambda_3+\Lambda_2-\Lambda_1  \\ \mathsf{E}_8 &  \Lambda_{30}-\Lambda_{15}-\Lambda_{10}-\Lambda_{6}+\Lambda_5+\Lambda_3+\Lambda_2-\Lambda_1\end{array}\right.
\end{equation}

We say that $\mathsf{T}$ is {\em exceptional\/} if it is in the list
of Arnold's 14 exceptional unimodal singularities
\cite{Arnold:1985kx}:
\begin{equation}
  \begin{array}{lll}
    \mathsf{T}\qquad&  (h,d_1,d_2,d_3)\qquad&f\\
    \hline
    \mathsf{U}_{12}&  (12,4,4,3)&x_1^3+x_2^3+x_3^4\\
    \mathsf{S}_{12}& (13,4,3,5)&x_1^2x_3+x_2x_3^2+x_1x_2^3\\ 
    \mathsf{Q}_{12}& (15,6,5,3)&x_1^2x_3+x_2^3+x_2^5\\
    \mathsf{S}_{11}& (16,5,4,6)&x_1^2x_3+x_2x_3^2+x_2^4\\
    \mathsf{W}_{13}&   (16,4,3,8)&x_1^4+x_1x_2^4+x_3^2\\
    \mathsf{Q}_{11}&(18,7,6,4)&x_1^2x_3+x_2^3+x_2x_3^3\\
    \mathsf{Z}_{13}&  (18,5,3,9)&x_1^3x_2+x_2^6+x_3^2\\
    \mathsf{W}_{12} &(20,5,4,10)&x_1^4+x_2^5+x_3^2\\ 
    \mathsf{Z}_{12}& (22,6,4,11)&x_1^3x_2+x_1x_2^4+x_3^2\\
    \mathsf{Q}_{10}& (24,9,8,6)&x_1^2x_3+x_2^3+x_3^4\\
    \mathsf{E}_{14}& (24,8,3,12)&x_1^3+x_2^8+x_3^2\\
    \mathsf{Z}_{11}& (30,8,6,15)&x_1^3x_2+x_2^5+x_3^2\\
    \mathsf{E}_{13}&  (30,10,4,15)&x_1^3+x_1x_2^5+x_3^2\\
    \mathsf{E}_{12}& (42,14,6,21)&x_1^3+x_2^7+x_3^2 
  \end{array}
\end{equation}
For these singularities, we have $\hat c=1+2/h$ so that we can take
$\mathpzc{D}=1$ and $\delta=-1$.  The $\chi_y$-genera with $t=y^{1/h}$
are as follows:
\begin{equation}
  \begin{array}{lll}
    \mathsf{T}\qquad & \chi_y \\
    \hline
    \mathsf{U}_{12} &1 + {t^3} + 2\,{t^4} +
    {t^6} + 2\,{t^7} + {t^8} + 2\,{t^{10}} + {t^{11}} + {t^{14}}\\
    \mathsf{S}_{12} &1 + {t^3} + {t^4} + {t^5}
    + {t^6} + {t^87} + {t^8} + {t^9} + {t^{10}} + {t^{11}} + {t^{12}} +
    {t^{15}}\\ 
    \mathsf{Q}_{12}  &1 + {t^3} + {t^5} +
    2\,{t^6} + {t^8} + {t^9} + 2\,{t^{11}} + {t^{12}} + {t^{14}} +
    {t^{17}}\\
    \mathsf{S}_{11}  &1 + {t^4} +
    {t^5} + {t^6} + {t^8} + {t^9} + {t^{10}} + {t^{12}} + {t^{13}} +
    {t^{14}} + {t^{18}}\\ 
    \mathsf{W}_{13} & 1 +
    {t^3} + {t^4} + {t^6} + {t^7} + {t^8} + {t^9} + {t^{10}} + {t^{11}}
    + {t^{12}} + {t^{14}} + {t^{15}} + {t^{18}}\\
    \mathsf{Q}_{11}  & 1 + {t^4} + {t^6} + {t^7} +
    {t^8} + {t^{10}} + {t^{12}} + {t^{13}} + {t^{14}} + {t^{16}} +
    {t^{20}}\\ 
    \mathsf{Z}_{13} & 1 + {t^3} + {t^5}
    + {t^6} + {t^8} + {t^9} + {t^{10}} + {t^{11}} + {t^{12}} + {t^{14}}
    + {t^{15}} + {t^{17}} + {t^{20}}\\
    \mathsf{W}_{12} & 1 + {t^4} + {t^5} + {t^8} +
    {t^9} + {t^{10}} + {t^{12}} + {t^{13}} + {t^{14}} + {t^{17}} +
    {t^{18}} + {t^{22}}\\
    \mathsf{Z}_{12} & 1 +
    {t^4} + {t^6} + {t^8} + {t^{10}} + 2\,{t^{12}} + {t^{14}} + {t^{16}}
    + {t^{18}} + {t^{20}} + {t^{24}}\\
    \mathsf{Q}_{10} & 1 + {t^6} + {t^8} + {t^9} +
    {t^{12}} + {t^{14}} + {t^{17}} + {t^{18}} + {t^{20}} + {t^{26}}\\
    \mathsf{E}_{14}  &1 + {t^3} + {t^6} + {t^8} + {t^9}
    + {t^{11}} + {t^{12}} + {t^{14}} + {t^{15}} + {t^{17}} + {t^{18}} +
    {t^{20}} + {t^{23}} + {t^{26}}\\
    \mathsf{Z}_{11}  &1 + {t^6} + {t^8} + {t^{12}} +
    {t^{14}} + {t^{16}} + {t^{18}} + {t^{20}} + {t^{24}} + {t^{26}} +
    {t^{32}}\\ 
    \mathsf{E}_{13}  & 1 + {t^4} +
    {t^8} + {t^{10}} + {t^{12}} + {t^{14}} + {t^{16}} + {t^{18}} +
    {t^{20}} + {t^{22}} + {t^{24}} + {t^{28}} + {t^{32}}\\
    \mathsf{E}_{12} & 1 + {t^6} + {t^{12}} + {t^{14}}
    + {t^{18}} + {t^{20}} + {t^{24}} + {t^{26}} + {t^{30}} + {t^{32}} +
    {t^{38}} + {t^{44}} 
  \end{array}
\end{equation}
The computation of $\bfL$ is again straightforward
\cite{Kawai:1992kx}:
\begin{equation}
  \begin{array}{ll}
    \mathsf{T}\qquad &\bfL\\
    \hline &\\[-4mm]
    \mathsf{U}_{12}&     \Lambda_{12}  + \Lambda_{4}- \Lambda_{3} -\Lambda_1\\
    \mathsf{S}_{12}&\Lambda_{13}-\Lambda_1\\
    \mathsf{Q}_{12}&\Lambda_{15}- \Lambda_{5} + \Lambda_{3}  -\Lambda_1\\
    \mathsf{S}_{11}&\Lambda_{16}- \Lambda_{8} + \Lambda_{4}  -\Lambda_1\\
    \mathsf{W}_{13}& \Lambda_{16} - \Lambda_{4} + \Lambda_{2}-\Lambda_1\\
    \mathsf{Q}_{11}& \Lambda_{18} - \Lambda_{9}+ \Lambda_{3} -\Lambda_1\\
    \mathsf{Z}_{13}& \Lambda_{18} - \Lambda_{6}+ \Lambda_{2} -\Lambda_1\\
    \mathsf{W}_{12}&  \Lambda_{20}- \Lambda_{10}+ \Lambda_{5}- \Lambda_{4}+ \Lambda_{2} -\Lambda_1\\
    \mathsf{Z}_{12}& \Lambda_{22} - \Lambda_{11} + \Lambda_{2} -\Lambda_1\\
    \mathsf{Q}_{10}&  \Lambda_{24}- \Lambda_{12}  - \Lambda_{8}+ \Lambda_{4}+ \Lambda_{3} -\Lambda_1\\
    \mathsf{E}_{14}& \Lambda_{24} - \Lambda_{8}- \Lambda_{6}+ \Lambda_{3}+ \Lambda_{2} -\Lambda_1\\
    \mathsf{Z}_{11}&  \Lambda_{30}- \Lambda_{15} - \Lambda_{10}+ \Lambda_{5}+ \Lambda_{2} -\Lambda_1\\
    \mathsf{E}_{13}& \Lambda_{30}- \Lambda_{15}- \Lambda_{6} + \Lambda_{3} + \Lambda_{2}  -\Lambda_1\\
    \mathsf{E}_{12}& \Lambda_{42}- \Lambda_{21}  - \Lambda_{14} + \Lambda_{7}- \Lambda_{6} + \Lambda_{3}+ \Lambda_{2}-\Lambda_1
  \end{array}
\end{equation}

To make manifest the dependence on the type under consideration, we
write $\bfL$ for type $\mathsf{T}$ as $\bfL[\mathsf{T}]$ {etc}.\ in
the following.  If $\mathsf{T}$ is simple, let its dual $\mathsf{T}^*$
be $\mathsf{T}$ itself.  If $\mathsf{T}$ is exceptional,
$\mathsf{T}^*$ is defined to be the dual of $\mathsf{T}$ in the sense
of Arnold's strange duality \cite{Arnold:1985kx}. For instance, we have
$\mathsf{S}_{11}^*=\mathsf{W}_{13}$ and
$\mathsf{E}_{12}^*=\mathsf{E}_{12}$. Note that $\mathsf{T}$ and
$\mathsf{T}^*$ share the same $h$.  We observed in \cite{Kawai:1992kx} 
that if $\mathsf{T}$ is either simple or exceptional  with
$\bfL[\mathsf{T}]=\sum_{d\mid h}u_d\Lambda_d$, then we have
\begin{equation}\label{duality}
  \bfL[\mathsf{T}^*]= -\sum_{d\vert h}u_{h/d} \Lambda_{d}.
\end{equation}
Moreover, it was anticipated there that this observation should
somehow be related to orbifoldization.  From the results in the
preceding sections it is now easy to understand why this phenomenon
must occur.  Indeed, we already saw in \cite{Kawai:1995nx} that
\begin{equation}\label{chiyduality}
  \chi_y^{\rm orb}[\mathsf{T}]=-\chi_y[\mathsf{T}^*].
\end{equation}
It is then obvious that
\begin{equation}
  \bfL^{\rm orb}[\mathsf{T}]=-\bfL[\mathsf{T}^*].
\end{equation}
On the other hand, we have
\begin{equation}
  \bfL^{\rm orb}[\mathsf{T}]=\sum_{d\vert h}u_{h/d} \Lambda_{d}
\end{equation}
since $\delta=\pm 1$   implies $h^{\rm orb}=h$ and $u^{\rm
  orb}_d=u_{h/d}$. 
So \eqref{duality} must hold by consistency.  Since $\wi^{\rm
  orb}_v[\mathsf{T}]=-\wi_v[\mathsf{T}^*]$, we see in a similar
fashion that
\begin{equation}
  \wi_v[\mathsf{T}^*]= -\sum_{d\vert h}u_{h/d} (\Delta_{d})_v.
\end{equation}
Of course, it is tempting to suppose that
\begin{equation}
  Z^{\rm orb}[\mathsf{T}](\tau,z)=-Z[\mathsf{T}^*](\tau,z)
\end{equation}
from which \eqref{chiyduality} follows.  If $\mathsf{T}$ is simple, in
which case we have $\hat c <1$, it is obvious that this holds by our
reasoning in \S \ref{sec:ASD}.

\section{The {\boldmath $\mathcal{N}=2$} minimal models and Witten's
  conjecture}\label{sec:minimal}

We begin this section by  reviewing  the elliptic genera of the
$\mathcal{N}=2$ minimal models following the description in \cite{Kawai:1994uq}. 
 For a positive integer $n$ and $m\in \ZZ_{2n}$, the theta function
$\theta_{m,n}$ is defined by
\begin{equation}
  \theta_{m,n}(\tau,u)=\sum_{j\in \ZZ+\frac{m}{2n}}\bfe{n(j^2\tau+ju)}.
\end{equation}
Fix a non-negative integer $k$ and set $h=k+2$.  The integrable
representations at level $k$ of the untwisted affine Lie algebra $\hat
A_1$ are labeled by $\ell\in \{1,2,\dots,h-1\}$ so that $(\ell-1)/2$
is the spin of the underlying representation of $A_1$.  The associated
Weyl-Kac characters are given by
\begin{equation}
  \chi_\ell(\tau,u)=\frac{\theta_{\ell,h}(\tau,u)-\theta_{-\ell,h}(\tau,u)}{\theta_{1,2}(\tau,u)-\theta_{-1,2}(\tau,u)}.
\end{equation}
Then the string functions $c^\ell_m(\tau)$ are defined through
\cite{Kac:1984fk}
\begin{equation}
  \chi_\ell(\tau,u)=\sum_{m \in \ZZ_{2k}}c^\ell_m(\tau)\theta_{m,k}(\tau,u).
\end{equation}
Note that $c^\ell_m(\tau)$ vanishes unless $\ell-1\equiv m\pmod{2}$.

The branching relation introduced by Gepner \cite{Gepner:1988yq} for
the $\mathcal{N}=2$ minimal models has to be slightly extended
\cite{Kawai:1994uq} so that the dependence on the variable $z$ which 
measures the $U(1)$ charge is manifest:
\begin{equation}
  \chi_\ell(\tau,u)\theta_{a,2}(\tau,u-z)=\sum_{m\in \ZZ_{2h}} \chi^{\ell,a}_m(\tau,z)\theta_{m,h}(\tau,u-\frac{2}{h}z)
\end{equation}
where $a \in \ZZ_4$.
This can be solved by the multiplication formula of theta functions  as
\begin{equation}
  \chi^{\ell,a}_m(\tau,z)=\sum\limits_{j\in \mathbb{ Z}}
  c^ \ell_{m-a+4j}(\tau)
  q^{\frac{h}{2k}\left(\frac{m}{h}-\frac{a}{2}+2j\right)^2}
  y^{\frac{m}{h}-\frac{a}{2}+2j}.
\end{equation}
We then define
\begin{equation}
  \II^ \ell_m(\tau,z)=\chi_m^{ \ell,1}(\tau,z)-\chi_m^{ \ell,-1}(\tau,z).
\end{equation}
Note that $\II^ \ell_m(\tau,z)$ vanishes unless $\ell \equiv
m\pmod{2}$.  Actually, $\II^ \ell_m(\tau,z)$ coincides with the
(twisted) character in the representation theory of the Ramond
$\mathcal{N}=2$ SCA \cite{Dobrev:1987dz,Matsuo:1987fj,Kiritsis:1988hi}. Though this is expected from our definition, we
will provide an explicit proof shortly.  It is easy to check $\II^
\ell_{-m}(\tau,z)=-\II^ \ell_m(\tau,-z)=-\II^ {h-\ell}_{h-m}(\tau,z)$
as well as
\begin{equation}
  \II^ \ell_m(\tau,0)=\delta_{m,\ell}^{(2h)}-\delta_{m,-\ell}^{(2h)}.
\end{equation}

As is well-known, the modular invariants of the $\hat A_1$
Wess-Zumino-Witten models at level $k$ fall into the
$\mathsf{A}\mathsf{D}\mathsf{E}$ pattern
\cite{Cappelli:1987uq,Cappelli:1987qy,Kato:1987vn,Gannon:2000kx}:
\begin{equation}
  \begin{array}{ll}
    \mathsf{T}&\sum_{\ell,\ell'}\mathscr{N}_{\ell,\ell'}\chi_\ell\overline{\chi_{\ell'}}\\[1mm]
    \hline\\[-3mm]
    \mathsf{A}_{h-1}&\sum_{\ell=1}^{h-1}\abs{\chi_\ell}^2\\[2mm]
    \mathsf{D}_{\frac{h}{2}+1}\ (2 \nmid \frac{h}{2})\qquad &\sum_{i=1}^{\frac{h-2}{4}}\abs{\chi_{2i-1}+\chi_{h-(2i-1)}}^2+2\abs{\chi_{\frac{h}{2}}}^2\\[2mm]
    \mathsf{D}_{\frac{h}{2}+1}\ (2\mid \frac{h}{2})&\sum_{i=1}^{\frac{h}{2}}\abs{\chi_{2i-1}}^2+   \abs{\chi_{\frac{h}{2}}}^2    +\sum_{i=1}^{\frac{h}{4}-1}(\chi_{2i}\overline{\chi_{h-2i}}+\chi_{h-2i}\overline{\chi_{2i}} )\\[2mm]
    \mathsf{E}_{6}&\abs{\chi_1+\chi_7}^2+\abs{\chi_4+\chi_8}^2+\abs{\chi_5+\chi_{11}}^2
    \\[2mm]
    \mathsf{E}_{7}&\abs{\chi_1+\chi_{17}}^2+\abs{\chi_5+\chi_{13}}^2+\abs{\chi_7+\chi_{11}}^2\\[2mm]
    &+\abs{\chi_9}^2+
    (\chi_3+\chi_{15})\overline{\chi_9}+\chi_9\overline{(\chi_3+\chi_{15})}
    \\[2mm]
    \mathsf{E}_{8}&\abs{\chi_1+\chi_{11}+\chi_{19}+\chi_{29}}^2+\abs{\chi_7+\chi_{13}+\chi_{17}+\chi_{23}}^2
    \\[2mm]
  \end{array}\label{A1-modular}
\end{equation}
where we assume as before that $h\ge 2$ for $\mathsf{A}_{h-1}$ and
$h\ge 6$, $h\in 2\ZZ$ for $\mathsf{D}_{\frac{h}{2}+1}$.  Note the
symmetry $\mathscr{N}_{ h-\ell, h-\ell'}=\mathscr{N}_{\ell,\ell'}$.  After picking up a
modular invariant from this list the elliptic genus is given by
\begin{equation}
  Z(\tau,z)=\sum_{ \ell, \ell'=1}^{h-1}\mathscr{N}_{ \ell, \ell'}\, \II^\ell_{\ell'}(\tau,z).
\end{equation}
That this expression satisfies the desired functional properties
\eqref{eg1} and \eqref{eg2} with $\hat c =1-2/h$ was confirmed in
\cite{Kawai:1994uq}.

It is straightforward to find the expressions for the twisted elliptic
genera:
\begin{equation}\label{twistedminimal}
  Z_{(\twista,\twistb)}(\tau,z)=\sum_{ \ell, \ell'=1}^{h-1}\mathscr{N}_{\ell,\ell'}\,\zeta_h^{(\ell'-\twista)\twistb} \,\II^\ell_{\ell'-2r}(\tau,z),\quad (\twista,\twistb)\in (\ZZ_h)^2.
\end{equation}
From this it is easy to show that $Z^{\rm orb}(\tau,z)=-Z(\tau,z)$ as
already claimed in \cite{Kawai:1994uq}.

Now we are in a position to prove 
Witten's conjecture \cite{Witten:1994fj} which amounts to say that the
elliptic genera of the $\mathcal{N}=2$ minimal model and the LG model
both labeled by the same $\mathsf{ADE}$ type must coincide. 
We should first confirm that the elliptic
genera of both theories are holomorphic with respect to $z$.  For the
LG model this has already been seen.  As for the minimal model this
can be confirmed for instance from Proposition~\ref{productformula}
below.  That Assumption~\ref{tWgenus} is satisfied by both theories is
also apparent.  Since the two theories have the same central charge
$\hat c =1-2/h$ and their elliptic genera satisfy the same functional
equations, we can apply Proposition \ref{egcriteria}. 
Since $\hat c<1$, it suffices to show that both theories have either
 the same twisted Witten indices or the same $\chi_y$-genus.  That both
have the same $\chi_y$-genus is easy to show and in fact well-known:
$\chi_y=\sum_{\ell} \mathscr{N}_{\ell,\ell}y^{(\ell-1)/h}=\sum_i y^{(m_i-1)/h}$.
So we are done.

It is also not too difficult to show directly that both have the same
twisted Witten indices.  To explain this, it is convenient to extend
the definition of $\mathscr{N}_{\ell,\ell'}$ for all $\ell,\ell'\in \ZZ$ by
\begin{equation}\label{Nextension}
  \mathscr{N}_{\epsilon\ell,\epsilon'\ell'}=\epsilon\epsilon'\mathscr{N}_{\ell,\ell'}\quad (\epsilon,\epsilon'=\pm 1),\qquad \mathscr{N}_{\ell+2p,\ell'+2p'}=\mathscr{N}_{\ell,\ell'}\quad (p,p'\in h\ZZ).
\end{equation}
We then recall\footnote{Historically speaking, this is the way the
  $\mathscr{N}_{\ell,\ell'}$ were first found in \cite{Cappelli:1987uq} and
  the $(\Omega_d)_{\ell,\ell'}$ are associated with the modular
  invariants for theta functions.}  that
\begin{equation}\label{Omegadecomp}
  \mathscr{N}_{\ell,\ell'}=\sum_{d\mid h}u_d(\Omega_d)_{\ell,\ell'}
\end{equation}
where
\begin{equation}
  (\Omega_d)_{\ell,\ell'}=\begin{cases}1&\text{if $\frac{h}{d}\mid\frac{\ell+\ell'}{2}$ and $d\mid \frac{\ell-\ell'}{2}$}\\
    0&\text{otherwise}
  \end{cases}
\end{equation}
and the $u_d$ are precisely the ones given in \eqref{ADEexp}.  Note
that this expression is consistent with \eqref{Nextension} since
$u_d=-u_{h/d}$ and
$(\Omega_d)_{\ell,-\ell'}=(\Omega_{h/d})_{\ell,\ell'}$.  It follows
from \eqref{twistedminimal} that
\begin{equation}
  \begin{split}
    \wi_{(\twista,\twistb)}&=\frac{1}{2}\sum_{(\ell,\ell')\in (\ZZ_{2h})^2}\mathscr{N}_{\ell,\ell'}\zeta_h^{(\ell'-\twista)\twistb}\delta^{(2h)}_{\ell,\ell'-2\twista}\\
    &=\frac{1}{2}\sum_{\substack{ (\ell,\ell')\in (\ZZ_{2h})^2\\
        \ell\equiv\ell' \!\!
        \!\pmod{2}}}\mathscr{N}_{\ell,\ell'}\zeta_h^{(\frac{\ell+\ell'}{2})\twistb}\delta^{(h)}_{\frac{\ell-\ell'}{2},-\twista}.
  \end{split}
\end{equation}
Therefore, all we have to show is
\begin{equation}
  \frac{1}{2}\sum_{\substack{(\ell,\ell')\in \ZZ_{2h}^2\\ \ell\equiv\ell' \!\!\!\!\pmod{2}}}(\Omega_d)_{\ell,\ell'}\zeta_h^{(\frac{\ell+\ell'}{2})\twistb}\delta^{(h)}_{\frac{\ell-\ell'}{2},-\twista}=(\Delta_d)_{(\twista,\twistb)},
\end{equation}
but this can be readily checked.

In the remainder of this section  we  reformulate Witten's conjecture which we have just
proved as mathematical identities generalizing the QPI.  For this
purpose, it is convenient to borrow several notations from $q$-analysis.
If $q$ and $x$ are complex numbers with $\abs{q}<1$, set
\begin{equation}
  (x;q)_\infty=\prod_{n=0}^\infty(1-xq^n),\qquad (q)_\infty=(q;q)_\infty\\
\end{equation}
and if in addition $x\ne 0$, put
\begin{equation}
  \theta(x,q)=\frac{(x;q)_\infty(q/x;q)_\infty}{(q)_\infty^2}\,.
\end{equation}
By Jacobi's triple product identity, we have
\begin{equation}\label{Jacobitriple}
  (x;q)_\infty(q/x;q)_\infty(q;q)_\infty=(q)_\infty^3\theta(x,q)=\sum_{n\in \ZZ}(-1)^nq^{\binom{n}{2}}x^n
\end{equation}
where $\binom{n}{2}=\frac{n(n-1)}{2}$.  The following identities
easily follow from the definitions:
\begin{align}\label{eq:theta}
  \theta(q/x,q)&=\theta(x,q)\,,\\
  \theta(x^{-1},q)&=-x^{-1}\theta(x,q)\,,\\
  \theta(q^i x,q)&=(-1)^iq^{-\binom{i}{2}}x^{-i}\theta(x,q) \quad
  ( i\in \ZZ).
\end{align}
Moreover, for a positive integer $h$, we have
\begin{equation}
  \theta(x,q)=\frac{(q^h)_\infty^{2h}}{(q)_\infty^2}\prod_{s=0}^{h-1}  \theta(q^s x,q^h).
\end{equation}
 By comparing the behaviors of both sides
as $x\to 1$, we see that 
\begin{equation}
  1=\frac{(q^h)_\infty^{2h}}{(q)_\infty^2}\prod_{s=1}^{h-1}  \theta(q^s,q^h).
\end{equation}
It thus follows that
\begin{equation}\label{theta-id}
  \frac{\theta(x,q)}{\theta(x,q^h)}=\prod_{s=1}^{h-1}  \frac{\theta(q^s x,q^h)}{\theta(q^s,q^h)}.
\end{equation}

We also quote the following results of Kronecker \cite{Weil:1976lr}:
\begin{prop}
  For $0<\abs{q}<\abs{x}<1$ and $y$ neither $0$ nor an integral power
  of $q$,
  \begin{equation}\label{eq:Kronecker-1}
    \sum_{r \in \ZZ}\frac{x^r}{1-q^r y}=\frac{\theta(xy,q)}{\theta(x,q)\theta(y,q)}\,.
  \end{equation}
  For $\abs{q}<\abs{x}<1$ and $\abs{q}<\abs{y}<1$,
  \begin{equation}\label{eq:Kronecker-2}
    \left(\sum_{r,s\ge 0}-\sum_{r,s<0}\right)q^{rs}x^ry^s=\frac{\theta(xy,q)}{\theta(x,q)\theta(y,q)}\,.
  \end{equation}
\end{prop}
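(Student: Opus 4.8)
The plan is to establish \eqref{eq:Kronecker-1} first by viewing each side as a meromorphic function of $y$ with $x$ and $q$ held fixed, and then to deduce \eqref{eq:Kronecker-2} from it by a double-series rearrangement valid in the smaller region $\abs{q}<\abs{x}<1$, $\abs{q}<\abs{y}<1$. Writing $\Phi(y)=\sum_{r\in\ZZ}x^r/(1-q^ry)$, I would first check convergence: the terms decay like $x^r$ as $r\to+\infty$ and like $-(x/q)^r/y$ as $r\to-\infty$, so for $\abs{q}<\abs{x}<1$ the series converges locally uniformly on $\CC^\times$ away from the points $y=q^{n}$ $(n\in\ZZ)$, at which exactly one denominator vanishes. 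Hence $\Phi$ is meromorphic on $\CC^\times$ with simple poles precisely at $y=q^{n}$, and the index shift $r\mapsto r+1$ gives the quasi-periodicity $\Phi(qy)=x^{-1}\Phi(y)$.

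On the right-hand side, the zeros of $\theta(y,q)$ sit exactly at $y=q^{n}$, so $\theta(xy,q)/(\theta(x,q)\theta(y,q))$ has the same poles, and the relation $\theta(qw,q)=-w^{-1}\theta(w,q)$ (the $i=1$ case of \eqref{eq:theta}) yields the same multiplier $x^{-1}$ under $y\mapsto qy$. The two sides therefore differ by a function $g(y)$ that becomes holomorphic on $\CC^\times$ once I verify that their polar parts agree, and for this it suffices to match a single residue, say at $y=1$: the $r=0$ term of $\Phi$ contributes residue $-1$, while near $y=1$ one has $\theta(y,q)\sim(1-y)$ and $\theta(xy,q)/\theta(x,q)\to1$, so the right-hand side also has residue $-1$. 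The quasi-periodicity then propagates the vanishing of the residue of $g$ to all $y=q^{n}$.

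The rigidity step is the crux. The difference $g$ is now holomorphic on $\CC^\times$ and still obeys $g(qy)=x^{-1}g(y)$; expanding it as a Laurent series $g(y)=\sum_n a_n y^n$ forces $a_n(q^n-x^{-1})=0$ for every $n$. Since $\abs{q}<\abs{x}<1$ precludes $x=q^{m}$ for any $m\in\ZZ$, each factor $q^n-x^{-1}$ is nonzero, so all $a_n$ vanish and $g\equiv0$, establishing \eqref{eq:Kronecker-1}. This is exactly the point at which the hypothesis that $y$ (here, dually, $x$) is not an integral power of $q$ is indispensable, and it plays the same role as the Atkin--Swinnerton-Dyer lemma used earlier.

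To obtain \eqref{eq:Kronecker-2}, I would restrict to $\abs{q}<\abs{x}<1$ and $\abs{q}<\abs{y}<1$ and expand the geometric series in $\Phi(y)$ according to the sign of $r$. For $r\ge0$ one has $\abs{q^ry}<1$, giving $1/(1-q^ry)=\sum_{s\ge0}q^{rs}y^s$ and hence $\sum_{r,s\ge0}q^{rs}x^ry^s$; for $r<0$ the condition $\abs{q}<\abs{y}$ makes $\abs{q^{-r}y^{-1}}<1$, and rewriting $1/(1-q^ry)=-\sum_{s\ge1}q^{-rs}y^{-s}$ and reindexing produces $-\sum_{r,s<0}q^{rs}x^ry^s$. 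Summing the two pieces turns the left-hand side of \eqref{eq:Kronecker-1} into the left-hand side of \eqref{eq:Kronecker-2}, so the second identity follows from the first in this region. I expect the only real difficulty to be bookkeeping rather than conceptual: pinning down the pole/zero structure and the single residue normalization in the first part, and carefully controlling the two conditionally convergent regimes in the rearrangement of the second.
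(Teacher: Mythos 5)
Your proof is correct, but there is nothing in the paper to compare it against: the paper does not prove this proposition at all --- it is quoted as a classical result of Kronecker, with a citation to Weil's monograph, and the functional-equation machinery of the surrounding sections is never applied to it. Your argument supplies a self-contained proof along the classical lines, and it checks out. For \eqref{eq:Kronecker-1}: the bilateral series converges locally uniformly precisely because $\abs{q}<\abs{x}<1$, both sides are meromorphic in $y$ on $\CC^\times$ with simple poles exactly at $y=q^n$ and the common multiplier $x^{-1}$ under $y\mapsto qy$, the residues at $y=1$ both equal $-1$ (indeed $\theta(y,q)\sim 1-y$ there, since $(y;q)_\infty\sim(1-y)(q)_\infty$ and $(q/y;q)_\infty\to(q)_\infty$), and the quasi-periodicity propagates the residue cancellation to every pole; the Laurent-coefficient relation $a_n(q^n-x^{-1})=0$ then kills the holomorphic difference because $\abs{q}<\abs{x}<1$ excludes $x=q^{-n}$ for every integer $n$. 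Two small corrections of emphasis only. First, the hypothesis that $y$ is not an integral power of $q$ merely keeps the left-hand side of \eqref{eq:Kronecker-1} defined; the rigidity step uses the corresponding property of $x$, which is automatic from $\abs{q}<\abs{x}<1$, as your parenthetical remark concedes --- so nothing about $y$ is ``indispensable'' there. Second, in the passage to \eqref{eq:Kronecker-2} there is no conditional convergence to control: the double series converges absolutely in the stated region (for the block with $r,s<0$ one needs both $\abs{q}<\abs{x}$ and $\abs{q}<\abs{y}$, which is exactly what the hypotheses provide), so the geometric-series rearrangement is automatic. Your approach is also the natural companion to the paper's own toolkit, since the holomorphic-plus-quasi-periodic rigidity step is the same mechanism that underlies the Atkin--Swinnerton-Dyer lemma invoked in \S\ref{sec:ASD}, though for the reason just given the analogy is methodological rather than logical.
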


While our definition of $\II^\ell_m$ has the advantage of an easy
access to its modular properties, its description in terms of string
functions is rather awkward.  Fortunately, there is an alternative
expression:
\begin{prop}
  We have
  \begin{equation}\label{productformula}
    \II^\ell_m(\tau,z)= q^{\frac{ \ell^2-m^2}{4h}}     y^{\frac{m-1}{h}-\frac{\hat c}{2}}
    \frac{\theta(y,q)\theta(q^ \ell,q^h)} {
      \theta(q^{\frac{ \ell+m}{2}}y^{-1},q^h)
      \theta(q^{\frac{ \ell-m}{2}}y,q^h)}
  \end{equation}
  where $\ell \equiv m \pmod{2}$ should be understood.
\end{prop}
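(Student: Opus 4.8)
The plan is to extract \eqref{productformula} from the branching relation rather than from the unwieldy string-function series. Write $D(\tau,w):=\theta_{1,2}(\tau,w)-\theta_{-1,2}(\tau,w)$ for the $\hat A_1$ Weyl denominator and $N_\ell(\tau,u):=\theta_{\ell,h}(\tau,u)-\theta_{-\ell,h}(\tau,u)$ for the numerator, so that $\chi_\ell=N_\ell/D$. First I would evaluate the branching relation at $a=1$ and $a=-1$ and subtract; since $\chi^{\ell,1}_m-\chi^{\ell,-1}_m=\II^\ell_m$, this gives
\[
\frac{N_\ell(\tau,u)\,D(\tau,u-z)}{D(\tau,u)}=\sum_{m\in\ZZ_{2h}}\II^\ell_m(\tau,z)\,\theta_{m,h}(\tau,u-\tfrac2hz),
\]
which displays the $\II^\ell_m$ as the coefficients of an expansion in the linearly independent family $\{\theta_{m,h}(\tau,u-\tfrac2hz)\}_{m\in\ZZ_{2h}}$.

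Next I would factorize the denominator. A direct application of Jacobi's triple product identity \eqref{Jacobitriple} gives $D(\tau,w)=-q^{1/8}\bfe{-w/2}(q)_\infty^3\,\theta(\bfe{w},q)$, so with $x=\bfe{u}$ and $y=\bfe{z}$ the factors $q^{1/8}(q)_\infty^3$ cancel and
\[
\frac{D(\tau,u-z)}{D(\tau,u)}=y^{1/2}\,\frac{\theta(xy^{-1},q)}{\theta(x,q)}.
\]
This ratio is exactly of Kronecker type: applying \eqref{eq:Kronecker-1} with $X=x$, $Y=y^{-1}$ and using $\theta(y^{-1},q)=-y^{-1}\theta(y,q)$ from \eqref{eq:theta} turns it into an explicit Laurent series in $x$, while $N_\ell(\tau,u)=\sum_{j\in\ZZ+\ell/(2h)}q^{hj^2}(x^{hj}-x^{-hj})$ is already such a series. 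The resulting common prefactor $-y^{-1/2}\theta(y,q)$ is independent of $u$ and accounts for the $\theta(y,q)$ together with part of the $y$-power and the sign in \eqref{productformula}.

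The heart of the argument is then to project the product $N_\ell(\tau,u)\cdot S(x,y)$, with $S$ the Kronecker series, onto the theta basis. The $x$-exponents furnished by $N_\ell$ lie in $\ZZ+\ell/2$ and those from $S$ in $\ZZ$, so only $m\equiv\ell\pmod 2$ survive, which is precisely the congruence built into \eqref{productformula}. I would fix the residue of the $x$-exponent modulo $h$, collect the corresponding sub-series, and reindex it so as to recognize $\theta_{m,h}(\tau,u-\tfrac2hz)=\sum_{j\in\ZZ+m/(2h)}q^{hj^2}x^{hj}y^{-2j}$ as an overall factor. The remaining coefficient should reassemble, after a second use of \eqref{eq:Kronecker-1} now with base $q^h$, into $q^{(\ell^2-m^2)/(4h)}y^{m/h}\,\theta(q^\ell,q^h)/[\theta(q^{(\ell+m)/2}y^{-1},q^h)\,\theta(q^{(\ell-m)/2}y,q^h)]$, which is the claimed expression; the base-change identity \eqref{theta-id} is at hand should the bases $q$ and $q^h$ need to be reconciled.

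I expect this reindexing to be the main obstacle. One must track the quadratic exponents with care and respect the annular convergence conditions ($\abs q<\abs x<1$, and $y$ neither $0$ nor a power of $q$) under which \eqref{eq:Kronecker-1} and \eqref{eq:Kronecker-2} hold, splitting the $x$-summation at the appropriate place; using \eqref{eq:Kronecker-2} in place of \eqref{eq:Kronecker-1} may make this bookkeeping cleaner by producing the two-sided double series directly. As a final check on signs and normalization I would specialize to $z=0$, i.e. $y\to1$, and confirm that \eqref{productformula} reproduces $\II^\ell_m(\tau,0)=\delta^{(2h)}_{m,\ell}-\delta^{(2h)}_{m,-\ell}$.
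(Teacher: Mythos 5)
Your starting point is sound and genuinely different from the paper's: you work directly from the subtracted branching relation and try to read off its coefficients, whereas the paper bypasses the branching relation entirely by quoting the Kac--Peterson double-series formula for the string functions $c^\ell_m$ and then matching the resulting triple series \eqref{tmp-I} against the expansion of the right-hand side of \eqref{productformula} obtained from Jacobi's triple product \eqref{Jacobitriple} and Kronecker's \emph{second} identity \eqref{eq:Kronecker-2}. Your preliminary steps all check out: the subtracted branching relation, the factorization $\theta_{1,2}(\tau,w)-\theta_{-1,2}(\tau,w)=-q^{1/8}\bfe{-w/2}(q)_\infty^3\,\theta(\bfe{w},q)$, and the parity selection rule $m\equiv\ell\pmod 2$.

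The gap sits exactly at the step you flag as the ``main obstacle,'' and it is not mere bookkeeping: the tool you name for it cannot do the job. Carry the collection out explicitly: writing $N_\ell(\tau,u)=\sum_{j\in\ZZ+\ell/(2h)}q^{hj^2}(x^{hj}-x^{-hj})$ and multiplying by the Kronecker series $\sum_{r\in\ZZ}x^r/(1-q^ry^{-1})$, the coefficient of $x^{hk+m/2}$ is, up to elementary prefactors,
\begin{equation*}
  \sum_{s\in\ZZ}\left[\frac{q^{h\left(k-s+\frac{\ell}{2h}\right)^2}}{1-q^{hs+\frac{m-\ell}{2}}y^{-1}}
  -\frac{q^{h\left(s-k+\frac{\ell}{2h}\right)^2}}{1-q^{hs+\frac{m+\ell}{2}}y^{-1}}\right].
\end{equation*}
The $q$-exponents are \emph{quadratic} in the summation variable $s$, so these are Appell--Lerch-type sums, not sums of the Kronecker shape $\sum_r X^r/(1-Q^rY)$ with $X,Y,Q$ independent of $r$; hence ``a second use of \eqref{eq:Kronecker-1} now with base $q^h$'' does not apply. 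The identity you still need --- that this difference of two Appell--Lerch sums equals $\theta(q^\ell,q^h)\big/\bigl[\theta(q^{\frac{\ell+m}{2}}y^{-1},q^h)\,\theta(q^{\frac{\ell-m}{2}}y,q^h)\bigr]$ up to an elementary factor --- is precisely the analytic heart of the proposition, not a consequence of the ingredients you have assembled. It is true, and it can be established either by expanding the denominators into two-sided double series and reindexing (after a substitution of the form $r=s+t$ the bilinear form $hrs$ of \eqref{eq:Kronecker-2} emerges, which is essentially the computation the paper performs), or by an elliptic-function argument; but as written your proposal asserts the conclusion at exactly the point where the work lies. The $y\to 1$ specialization at the end is a useful consistency check, but it cannot close this gap.
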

\begin{proof}
  Consider first the case $h=2$.  If $(\ell,m)=(1, 1)$, then the RHS
  of \eqref{productformula} reads
  \begin{equation}
    \frac{\theta(y,q)\theta(q,q^2)} {
      \theta(qy^{-1},q^2)
      \theta(y,q^2)}=\frac{\theta(y,q)}{\theta(y,q^2)}\frac{\theta(q,q^2)}{\theta(qy,q^2)}=1
  \end{equation}
  where we used \eqref{theta-id} in the last equality. On the other
  hand $\II^1_1(\tau,z)=1$, so the assertion holds.  The situation for
  $(\ell,m)=(1,-1)$ is similar.

  Suppose next that $h>2$. The string function $c^\ell_m$ is well-known
  to be expressed as \cite{Kac:1984fk}
  \begin{equation}
    c^ \ell_m(\tau)=\frac{q^{d_{ \ell,m}}}{(q)_\infty^3}\left(\sum_{r,s\ge 0}-\sum_{r,s<0}\right)(-1)^{r+s}q^{\binom{r+s+1}{2}+krs+\frac{ \ell-1+m}{2}r+\frac{ \ell-1-m}{2}s}
  \end{equation}
  where $ d_{ \ell,m}=\frac{ \ell^2}{4h}-\frac{m^2}{4k}-\frac{1}{8} $.
  Note that some terms in the sum are spurious due to the identity
  $\sum_{n\in \ZZ}(-1)^nq^{\binom{n}{2}+na}=0$ $(a\in \ZZ)$ which
  follows from \eqref{Jacobitriple} and $\theta(q^a,q)=0$.  We thus
  obtain from the definition of $\II_m^\ell(\tau,z)$ that
  \begin{equation}\label{tmp-I}
    \begin{split}
      &(q^{\frac{ \ell^2-m^2}{4h}} y^{\frac{m-1}{h}-\frac{\hat c}{2}} )^{-1}\,\II^\ell_m(\tau,z)\\
      &\qquad =\frac{1}{(q)_\infty^3} \sum_{n\in \ZZ}
      \left(\sum_{r,s\ge 0}-\sum_{r,s<0}\right)
      (-1)^{n+r+s}q^{\binom{n+r-s}{2}+hrs+\frac{ \ell+m}{2}r+\frac{
          \ell-m}{2}s} y^n.
    \end{split}
  \end{equation}
  On the other hand, we see from \eqref{Jacobitriple} and
  \eqref{eq:Kronecker-2} that
  \begin{equation}
    \begin{split}
      \theta(y,q) & \frac{ \theta(q^ \ell,q^h)} { \theta(q^{\frac{
            \ell+m}{2}}y^{-1},q^h)
        \theta(q^{\frac{ \ell-m}{2}}y,q^h)}\\
      &=\inv{(q)_\infty^3} \sum_{i\in \ZZ}(-1)^iy^iq^{\binom{i}{2}}
      \cdot
      \left(\sum_{r,s\ge 0}-\sum_{r,s<0}\right) q^{hrs}\left(q^{\frac{ \ell+m}{2}}y^{-1}\right)^r\left(q^{\frac{ \ell-m}{2}}y\right)^s\\
      &=\inv{(q)_\infty^3}\sum_{i\in \ZZ} \left(\sum_{r,s\ge
          0}-\sum_{r,s<0}\right) (-1)^iq^{\binom{i}{2}+hrs+\frac{
          \ell+m}{2}r+\frac{ \ell-m}{2}s}y^{i-r+s}.
    \end{split}
  \end{equation}
  By replacing $i$ by $n+r-s$ this is seen to coincide with
  \eqref{tmp-I}.
\end{proof}
\begin{rem} As promised, the RHS of \eqref{productformula} is in fact
  the product expression found in \cite{Matsuo:1987fj} for the (twisted)
  character of the Ramond $\mathcal{N}$=2 SCA.
\end{rem}

With this provision, we can restate our result as
\begin{thm} Suppose that $\mathsf{T}$ is of $\mathsf{ADE}$ type with $(h,d_1,d_2,d_3)$ 
  and $\mathscr{N}_{\ell,\ell'}$ given respectively by
  \eqref{ADEtypes} and \eqref{A1-modular}.  Then,
  \begin{equation}\label{eq:GQPI}
    \prod_{i=1}^{3}\frac{\theta(x^{h-d_i},q)}{\theta(x^{d_i},q)}=\sum_{\ell,\ell'=1}^{h-1}\mathscr{N}_{\ell,\ell'}\,\JJ^\ell_{\ell'}(x,q)
  \end{equation}
  where
  \begin{equation}
    \JJ^\ell_{\ell'}(x,q)=q^{\frac{ \ell^2-\ell' {}^2}{4h}} x^{
      \ell'-1}     
    \frac{\theta(x^h,q)\theta(q^ \ell,q^h)} {
      \theta(q^{\frac{ \ell+  \ell'}{2}}x^{-h},q^h)
      \theta(q^{\frac{ \ell-  \ell'}{2}}x^h,q^h)}.
  \end{equation}
\end{thm}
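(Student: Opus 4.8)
The plan is to recognize that \eqref{eq:GQPI} is nothing but Witten's conjecture---the coincidence of the LG and minimal-model elliptic genera of a given $\mathsf{ADE}$ type---rewritten in the variable $x$ with $\bfe{z}=x^h$, so that the entire task reduces to matching overall normalizations on the two sides. Accordingly I would not prove anything genuinely new here, but rather translate the already-established equality of elliptic genera into the stated product identity.

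First I would set up a dictionary between the normalized Jacobi theta function $\vartheta$ of \S\ref{sec:LG} and the function $\theta(\cdot,q)$ used in this section. Comparing the two product expansions with $w=\bfe{u}$, one reads off $\vartheta(\tau,u)=w^{-1/2}\theta(w,q)=\bfe{-u/2}\,\theta(\bfe{u},q)$. Substituting this into the LG elliptic genus \eqref{LGeg} for the three-variable superpotential of type $\mathsf{T}$, with $\weight_i=d_i/h$ and the substitution $\bfe{z/h}=x$ (so that $\bfe{\weight_i z}=x^{d_i}$ and $\bfe{(1-\weight_i)z}=x^{h-d_i}$), the phase prefactors collect into $\bfe{-\frac{z}{2}\sum_i(1-2\weight_i)}=x^{-h\hat c/2}$, since $\hat c=\sum_i(1-2\weight_i)$. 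Hence the left-hand side of \eqref{eq:GQPI} is exactly $x^{h\hat c/2}$ times the LG elliptic genus evaluated at $y=x^h$.

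Next I would treat the right-hand side using the product expression \eqref{productformula}. Setting $y=x^h$ there turns $y^{(m-1)/h}$ into $x^{m-1}$, the factor $y^{-\hat c/2}$ into $x^{-h\hat c/2}$, $\theta(y,q)$ into $\theta(x^h,q)$, and the two denominator factors into $\theta(q^{(\ell\pm m)/2}x^{\mp h},q^h)$; comparing with the definition of $\JJ^\ell_{\ell'}$ one finds $\JJ^\ell_{\ell'}(x,q)=x^{h\hat c/2}\,\II^\ell_{\ell'}(\tau,z)\big|_{y=x^h}$. Summing against $\mathscr{N}_{\ell,\ell'}$ then exhibits the right-hand side of \eqref{eq:GQPI} as precisely $x^{h\hat c/2}$ times the minimal-model elliptic genus evaluated at $y=x^h$.

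With both sides thus displayed as the common factor $x^{h\hat c/2}$ multiplying the respective elliptic genera at $y=x^h$, the identity \eqref{eq:GQPI} collapses to the bare assertion that the LG and minimal-model elliptic genera of type $\mathsf{T}$ agree---which is exactly the statement we have already proved above through Proposition~\ref{egcriteria}, both theories sharing $\hat c=1-2/h<1$ and the same $\chi_y$-genus. The argument is therefore entirely bookkeeping, and there is no substantive obstacle; the only point requiring genuine care is checking that the fractional power $x^{h\hat c/2}=x^{(h-2)/2}$ arises with the identical exponent on both sides, so that it cancels cleanly and leaves an identity in integral powers of $x$. The separate handling of $h=2$ already carried out in the proof of \eqref{productformula} guarantees there is no degenerate case to treat apart.
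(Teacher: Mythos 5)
Your proposal is correct and matches the paper's own route: the paper likewise treats \eqref{eq:GQPI} as a pure restatement of the already-proved Witten conjecture, using the product expression \eqref{productformula} for $\II^\ell_m$ and the substitution $y=x^h$ (with the dictionary $\vartheta(\tau,u)=\bfe{-u/2}\,\theta(\bfe{u},q)$ converting the LG side), so that both sides become $x^{h\hat c/2}$ times the respective elliptic genera. Your bookkeeping of the prefactor $x^{h\hat c/2}=x^{(h-2)/2}$ and the identification $\JJ^\ell_{\ell'}(x,q)=x^{h\hat c/2}\,\II^\ell_{\ell'}(\tau,z)\big|_{y=x^h}$ are exactly the content implicit in the paper's phrase ``with this provision, we can restate our result.''
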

\begin{rem}
  Note that the LHS of \eqref{eq:GQPI}  is a redundant expression and can actually be simplified as
  $\displaystyle
  \prod_{i=1}^{n'}\frac{\theta(x^{h-d_i},q)}{\theta(x^{d_i},q)}$ where
  $n'=1$ if $\mathsf{T}=\mathsf{A}_{h-1}$ and $n'=2$ if otherwise.
\end{rem}  
This theorem can be interpreted as $\mathsf{A}\mathsf{D}\mathsf{E}$
generalizations of the QPI for the following reason.  (See
\cite{Cooper:2006qy} for a comprehensive survey on the history, many
available proofs and some generalizations of the QPI.)  When
$\mathsf{T}=\mathsf{A}_{h-1}$, the explicit form of \eqref{eq:GQPI} is
\begin{equation}\label{eq:A-series}
  \frac{\theta(x^{h-1},q)}{\theta(x,q)}=\sum_{ \ell=1}^{h-1}x^{ \ell-1}
  \frac{\theta(x^h,q)\theta(q^ \ell,q^h)}{\theta(q^ \ell x^{-h},q^h)\theta(x^h,q^h)}\,.
\end{equation}
As before, it is easy to confirm the case $h=2$. The first
non-trivial case $h=3$ turns out to be just the QPI (expressed in one
among many other possible ways \cite{Cooper:2006qy}).  By using the
$h$ term relation of  the {Weierstra\ss}  $\sigma$-function, Bailey \cite{Bailey:1953lr} was
able to generalize the QPI as
\begin{equation}\label{eq:Bailey}
  \frac{\theta(x^{h-1},q)}{\theta(x,q)}=\prod_{s=1}^{h-1}\frac{ \theta(q^s x^h,q^h)}{ \theta(q^s,q^h)}\sum_{ \ell=1}^{h-1}x^{ \ell-1}\frac{\theta(q^ \ell,q^h)}{\theta(q^ \ell x^{-h},q^h)}\,.
\end{equation}
With the aid of \eqref{theta-id}, it is easy to see the equivalence
between \eqref{eq:A-series} and \eqref{eq:Bailey}.

It should be mentioned that we can prove \eqref{eq:A-series} by using
\eqref{eq:Kronecker-1} as well.  Indeed, if one notices
$\theta(q^h,q^h)=0$, the RHS of \eqref{eq:A-series} can be rewritten
as
\begin{equation}
  \begin{split}
    \theta(x^h,q) \sum_{ \ell=1}^{h}x^{ \ell-1}\frac{\theta(q^
      \ell,q^h)}{\theta(q^ \ell x^{-h},q^h)\theta(x^h,q^h)} &=
    \theta(x^h,q)
    \sum_{ \ell=1}^hx^{ \ell-1}\sum_{r\in \ZZ}\frac{(q^ \ell x^{-h})^r}{1-q^{hr}x^h}\\
    &=\theta(x^h,q) \sum_{r\in\ZZ}\frac{(q/x^h)^r}{1-q^r x}\,.
  \end{split}
\end{equation}
On the other hand, we have
\begin{equation}\label{ex:Baileytmp}
  \frac{\theta(x^{h-1},q)}{\theta(x,q)}=\theta(x^h,q)
  \frac{\theta(q/x^{h-1},q)}{\theta(q/x^h,q)\theta(x,q)}
  =\theta(x^h,q)
  \sum_{r\in\ZZ}\frac{(q/x^h)^r}{1-q^r x}\,.
\end{equation}

In view of the fact that the QPI admits a variety of derivations
\cite{Cooper:2006qy} it might be interesting to pursue alternative
proofs of the theorem.

\acknowledgments The author is supported by KAKENHI (19540024).


\providecommand{\href}[2]{#2}\begingroup\raggedright\endgroup

\end{document}